\newtheorem{theorem}{Theorem}
\newtheorem{lemma}[theorem]{Lemma}
\newtheorem{proposition}[theorem]{Proposition}
\newenvironment{proof}[1][Proof]{\noindent \textbf{#1.} }{\  \rule{0.5em}{0.5em}}
\begin{document}

\title{On the Black's equation for the risk tolerance function\thanks{%
This work has been presented at seminars and workshops at Columbia, Oxford
and Standford. The authors would like to thank the participants for fruitful
comments and suggestions.}}
\author{Sigrid K\"allblad\thanks{%
TU-Wien; sigrid.kaellblad@tuwien.ac.at.} \ and Thaleia Zariphopoulou\thanks{%
Depts. of Mathematics and IROM, The University of Texas at Austin, and the
Oxford-Man Institute, University of Oxford; zariphop@math.utexas.edu.}}
\date{First draft: May 2016\thanks{%
Part of this work first appeared in the Ph.D. Thesis \cite{kallblad-thesis}
of the first author and in the preprint \cite{kallblad-Zar} of the authors. }%
; This draft: January 2017}
\maketitle

\begin{abstract}
We analyze a nonlinear equation proposed by F. Black (1968) for the optimal
portfolio function in a log-normal model. We cast it in terms of the risk
tolerance function and provide, for general utility functions, existence,
uniqueness and regularity results, and we also examine various monotonicity,
concavity/convexity and S-shape properties. Stronger results are derived for
utilities whose inverse marginal belongs to a class of completely monotonic
functions.
\end{abstract}

\section{Introduction}

In 1968\footnote{%
The authors would like to thank P. Carr for bringing this work to their
attention.}, Fisher Black discovered an autonomous equation, which he called
the "\textit{investment equation}", that the optimal portfolio function
satisfies, as a function of wealth, time and the optimal consumption in a
log-normal model (see \cite{black88}). With the notation used therein, the
equation is 
\begin{equation}
x_{2}=\left( r-c_{1}\right) x-\left( wr-c\right) x_{1}-\frac{1}{2}%
s^{2}x^{2}x_{11},  \label{Black-equation}
\end{equation}%
where $x$ is the optimal investment function (the subscripts $1,2$
correspond to partial derivatives in space and time) and $w$ the wealth
argument, $c$ is the optimal consumption function, and $r,s$ given market
parameters.

Black's equation was much later independently (re)discovered in \cite{he}
for a problem examining the compatibility of the optimal consumption
function and the utility function, and was also used in \cite{huang99} where
turnpike (long-term) properties of the optimal portfolio functions were
examined without intermediate consumption. More recently, it has been used
in \cite{agrawal}, \cite{bian}, \cite{fouque}, \cite{Shkolnikov}, \cite{Xia}%
, and others.

The aim herein is to provide a systematic study of the above equation in a
multi-stock log-normal model without intermediate consumption. We cast it in
terms of the local risk tolerance function $r\left( x,t\right) $ which, in a
trading horizon $\left[ 0,T\right] ,$ takes the form 
\[
r_{t}+\frac{1}{2}\left \vert \lambda \right \vert ^{2}r^{2}r_{xx}=0, 
\]%
$\left( x,t\right) \in \mathbb{R}_{+}\times \left[ 0,T\right) ],$ with $%
r\left( x,T\right) =R\left( x\right) ,$ the risk tolerance coefficient, and $%
\lambda $ being the market price of risk (cf. Proposition 3)\textbf{. }%
Throughout, we will refer to this nonlinear equation as the\textit{\ Black's
equation} for the risk tolerance function.

A related nonlinear equation is satisfied by the reciprocal $\gamma \left(
x,t\right) =\frac{1}{r\left( x,t\right) },$ known as the local risk aversion
function. It then follows that $\gamma \left( x,t\right) $ solves a \textit{%
porous medium} equation (PME),%
\[
\gamma _{t}-\frac{1}{2}\left( \gamma ^{-1}\right) _{xx}=0. 
\]%
The PME is typically classified by its "exponent" $m$ in the standard
representation $\gamma _{t}-\frac{1}{2}\left( \gamma ^{m}\right) _{xx}=0.$
In our case, $m=-1,$ which corresponds to the so-called \textit{fast
diffusion }regime $\left( m<1\right) $; see, for example, (see \cite{vasquez}%
).

Our contribution is multi-fold. We study the existence, uniqueness and
regularity of the solution to the Black's equation and provide estimates for
its derivatives. We also study its spatial monotonicity, concavity/convexity
and $S$-shaped properties. Specifically, we investigate when such properties
satisfied by the risk tolerance coefficient $R\left( x\right) $ are
inherited to $r\left( x,t\right) ,$ for all times $t\in \left[ 0,T\right) .$
We also study the time-monotonicity of $r\left( x,t\right) $ as well as its
dependence on $\left \vert \lambda \right \vert ^{2}.$ Furthermore, we
provide analogous results for the local relative risk tolerance function, $%
\tilde{r}\left( x,t\right) =\frac{r\left( x,t\right) }{x}.$

With the exception of the uniqueness result, for all other ones pivotal role
plays a function $H$ that solves the heat equation, appearing through the
nonlinear transformation, 
\begin{equation}
r\left( H\left( z,t\right) ,t\right) =H_{x}\left( z,t\right) ,
\label{r-H-preliminary}
\end{equation}%
$(z,t)\in \mathbb{R\times }\left[ 0,T\right] .$

This transformation results from a variation of the Legendre-Fenchel one
(applied to the value function), frequently used in stochastic optimization
with linear control dynamics. However, the form we propose is much more
convenient, for it simplifies a number of highly nonlinear expressions we
want to analyze. Indeed, these expressions reduce to much simpler ones
involving the partial derivatives of the aforementioned harmonic function.
Like $H$, these derivatives also solve the heat equation and, thus,
classical results can be in turn applied like the maximum principle, the
preservation of the log-concavity/convexity of its solutions, the properties
of their zero points sets, and others.

We note that this transformation was first proposed in an Ito-diffusion
setting in \cite{musiela-SIFIN} for a different class of risk preferences,
the so-called time-monotone forward performance processes. Therein, the
corresponding harmonic function satisfies the ill-posed heat equation, and
has quite different characteristics. Nevertheless, various expressions and
equations are algebraically similar but the nature of the solutions is
fundamentally different.

To show uniqueness, we do not use (\ref{r-H-preliminary}) but work directly
with a semilinear equation (cf. (\ref{eq-F})) satisfied by $r^{2}\left(
x,t\right) .$ In turn, the uniqueness in combination with the
convexity/concavity results yields the monotonicity of the risk tolerance
function with respect to $\left \vert \lambda \right \vert ^{2}.$

Some of the results we provide, namely, the monotonicity and the concavity,
have been shown before (see \cite{borell}) but we provide considerably
shorter and more direct alternative proofs, bypassing various lengthy
arguments. The uniqueness was also established in \cite{Xia} using duality
techniques, while we provide a much shorter proof based on PDE arguments.
The regularity results as well as the ones on the time monotonicity and the $%
S$-properties are, to the best of our knowledge, new.

Besides the results for general risk preferences, we examine the class of
utilities whose inverse marginal $I$ is a completely monotonic function of a
given form, namely, $I\left( x\right) =\int_{\alpha }^{\beta }x^{-y}d\mu
\left( y\right) ,$ $0<\alpha <\beta <\infty ,$ and $\mu $ a finite positive
Borel measure. For such cases, stronger bounds and regularity estimates can
be derived, which we provide.

We work in a multi-stock log-normal model. While this is a simple market
setting, we nevertheless gain a number of valuable insights for the risk
tolerance function and its derivatives that were not known before, and also
provide much shorter and direct proofs for existing results.

Beyond the log-normal model, our results can be also used in more general
diffusion settings for the analysis of the zeroth order term in stochastic
factor models with slow and fast factors (see, for example, \cite{agrawal}, 
\cite{fouque-hu}, \cite{fouque}, \cite{lorig}, \cite{lorig-sircar}). In
these models, this term is similar to the value function (\ref{value})
herein but with rescaled deterministic time.

The paper is organized as follows. In section 2 we introduce the investment
model and provide background and auxiliary results. In section 3 we derive
the risk tolerance equation, and study the uniqueness and regularity of its
solutions. In section 4, we provide further properties and conclude in
section 5 with the example of a family of completely monotonic inverse
marginals.

\section{The model and preliminary results}

\label{chap2model}

We start, for the reader's convenience, with a brief review of the classical
Merton problem (\cite{merton}). Investment takes place on $\left[ 0,T\right]
,$ a given trading horizon. The market environment consists of one riskless
and $N$ risky securities. The risky securities are stocks and their prices
are modelled as log-normal processes. Namely, for $i=1,...,N,$ the price $%
S_{t}^{i},$ $0\leq t\leq T,$ of the $i^{th}$ risky asset satisfies 
\begin{equation}
dS_{t}^{i}=S_{t}^{i}\left( \mu ^{i}dt+\sum_{j=1}^{N}\sigma
^{ji}dW_{t}^{j}\right) ,  \label{stock-price}
\end{equation}%
with $S_{0}^{i}>0.$ The process $W_{t}=\left( W_{t}^{1},...,W_{t}^{N}\right)
,$ $t\geq 0,$ is a standard $N-$dimensional Brownian motion defined on a
probability space $\left( \Omega ,\mathcal{F},\mathbb{P}\right) $ endowed
with the filtration $\mathcal{F}_{t}$ $=$ $\sigma \left( W_{s}:\textrm{ }0\leq
s\leq t\right) .$

The constants $\mu ^{i}$ and $\sigma ^{i}=\left( \sigma _{t}^{1i},...,\sigma
^{Ni}\right) ,i=1,...,N,$ $t\geq 0,$ take values in $\mathbb{R}$ and $%
\mathbb{R}^{N}$, respectively. For brevity, we use $\sigma $ to denote the $%
N\times N$ matrix volatility $\left( \sigma ^{ji}\right) ,$ whose $i^{th}$
column represents the volatility $\sigma ^{i}=\left( \sigma ^{1i},...,\sigma
^{Ni}\right) $ of the $i^{th}$ risky asset. Alternatively, we write equation
(\ref{stock-price}) as 
\[
dS_{t}^{i}=S_{t}^{i}\left( \mu ^{i}dt+\sigma ^{i}\cdot dW_{t}\right) . 
\]%
The riskless asset, the savings account, offers constant interest rate $r>0.$
We denote by $\mu $ the $N\times 1$ vector with coordinates $\mu ^{i}$ and
by $\mathbf{1}$ the $N-$dimensional vector with every component equal to
one. We assume that the volatility matrix is invertible, and define the
vector%
\begin{equation}
\lambda =\left( \sigma ^{T}\right) ^{-1}\left( \mu -r\mathbf{1}\right) .
\label{Lamda}
\end{equation}

Starting at $t\in \lbrack 0,T)$ with initial endowment $x>0$, the investor
invests at any time $s\in (t,T]$ in the riskless and risky assets. The
present value of the amounts invested are denoted, respectively, by $\pi
_{s}^{0}$ and $\pi _{s}^{i\textrm{ }}$, $i=1,...,N$, and are taken to be
self-financing. The (present) value of her investment is, then, given by $%
X_{s}^{\pi }=\sum_{k=0}^{N}\pi _{s}^{k},$ $s\in (t,T],$ which solves 
\begin{equation}
dX_{s}^{\pi }=\sigma \pi _{s}\cdot \left( \lambda ds+dW_{s}\right) ,
\label{discounted wealth}
\end{equation}%
with $X_{t}=x,$ and where the (column) vector, $\pi _{s}=\left( \pi _{s}^{i};%
\textrm{ }i=1,...,N\right) .$

A self-financing investment process $\pi _{s}$ is admissible if $\pi _{s}\in 
\mathcal{F}_{s},$

$E_{\mathbb{P}}\left( \int_{t}^{T}\left \vert \pi _{s}\right \vert
^{2}ds\right) <\infty $ and the associated wealth remains non-negative, $%
X_{s}^{\pi }\geq 0$, $\ s\in \left[ t,T\right] $. We denote the set of
admissible strategies by $\mathcal{A}$.

The utility function at $T$ is given by $U:\mathbb{R}_{+}\rightarrow \mathbb{%
R},$ and it is assumed to be a strictly concave, strictly increasing and $%
C^{\infty }\left( 0,\infty \right) $ function, satisfying the Inada
conditions $\lim_{x\downarrow 0}U^{\prime }(x)=\infty $ and $\lim_{x\uparrow
\infty }U^{\prime }(x)=0.$

We recall the inverse marginal $I:\mathbb{R}_{+}\rightarrow \mathbb{R}_{+},$ 
$I\left( x\right) =\left( U^{\prime }\right) ^{\left( -1\right) }\left(
x\right) $. It is assumed that it satisfies, for $C,\delta >0,$ 
\begin{equation}
I\left( x\right) \leq C\left( 1+x^{-\delta }\right) ,  \label{I-decay}
\end{equation}%
and for positive constants $c_{n},C_{n},$ $n=1,2,3,$ with $c_{2}>1,$ 
\begin{equation}
c_{1}I\left( x\right) \leq \left \vert xI^{\prime }\left( x\right) \right
\vert \leq C_{1}I\left( x\right) ,  \label{I-1}
\end{equation}%
\begin{equation}
c_{2}\left \vert I^{\prime }\left( x\right) \right \vert \leq xI^{\prime
\prime }\left( x\right) \leq C_{2}\left \vert I^{\prime }\left( x\right)
\right \vert \textrm{ \  \ and \  \ }\left \vert xI^{\prime \prime \prime
}\left( x\right) \right \vert \leq C_{3}I^{\prime \prime }\left( x\right) .%
\textrm{\ }  \label{I-2}
\end{equation}%
The above conditions are rather mild and satisfied by a large class of
utility functions. For example, if $I\left( x\right) =\Sigma
_{i=1}^{N}x^{-\alpha _{i}},$ $0<\alpha _{1}<...,\alpha _{N},$ (\ref{I-1})
and (\ref{I-2}) hold for $c_{1}=\alpha _{1},C_{1}=\alpha _{N},c_{2}=\alpha
_{1}+1,C_{2}=\alpha _{N}+1,C_{3}=\alpha _{N}+2.$

Throughout, we will use the domain notation $\mathbb{D}_{+}=\mathbb{R}%
_{+}\times \left[ 0,T\right] $ and $\mathbb{D}=\mathbb{R}\times \left[ 0,T%
\right] .$

The value function $u:\mathbb{D}_{+}\mathbb{\rightarrow R}_{+}$ is defined
as the maximal expected utility of terminal wealth,

\begin{equation}
u(x,t)=\sup_{\pi \in \mathcal{A}}E_{\mathbb{P}}\left( \left. U\left(
X_{T}^{\pi }\right) \right \vert X_{t}^{\pi }=x\right) ,  \label{value}
\end{equation}%
where $X_{s}^{\pi },$ $s\in \left( t,T\right] ,$ solves (\ref{discounted
wealth}). This optimization problem has been extensively studied (see, for
example, \cite{bjork09} and \cite{karatzas87}). It is known that $u\in
C^{\infty ,1}\left( \mathbb{D}_{+}\right) ,$ it is strictly increasing and
strictly concave in the spatial variable, and solves the
Hamilton-Jacobi-Bellman (HJB)\ equation,%
\begin{equation}
u_{t}-\frac{1}{2}\left \vert \lambda \right \vert ^{2}\frac{u_{x}^{2}}{u_{xx}%
}=0,  \label{HJB}
\end{equation}%
with $u(x,T)=U(x)$ and $\lambda $ as in (\ref{Lamda}).

The absolute\textit{\ }risk tolerance coefficient $R\left( x\right) $ and
the local risk tolerance function $r\left( x,t\right) $ are defined,
respectively, by%
\begin{equation}
R\left( x\right) =-\frac{U^{\prime }\left( x\right) }{U^{\prime \prime
}\left( x\right) }\textrm{ \  \  \  \  \ and \  \ }r\left( x,t\right) =-\frac{%
u_{x}(x,t)}{u_{xx}(x,t)}\textrm{,}  \label{r-def}
\end{equation}%
for $\left( x,t\right) \in \mathbb{D}_{+}.$ A standing assumption is that $%
R\left( 0\right) =0$ and that $R\left( x\right) $ is strictly increasing
(see \cite{arrow65}). The latter implies that $I^{\prime \prime }\left(
x\right) >0,$ $x>0.$

To ease the presentation, we eliminate the terminology "absolute" and
"local".

Next we introduce the function $H$ which plays a crucial role herein and
provide auxiliary results for it and its spatial derivatives.

\begin{lemma}
\label{helpus} Let $H:\mathbb{D}\rightarrow \mathbb{R}_{+}$ be defined by%
\begin{equation}
u_{x}\left( H(z,t),t\right) =e^{-z-\frac{1}{2}\left \vert \lambda \right
\vert ^{2}\left( T-t\right) },  \label{hh}
\end{equation}%
where $u\left( x,t\right) $ is as in (\ref{value}). Then, $H\left(
z,t\right) \in C^{\infty ,1}\left( \mathbb{D}\right) $ and solves the heat
equation 
\begin{equation}
H_{t}+\frac{1}{2}\left \vert \lambda \right \vert ^{2}H_{zz}=0,
\label{1heat}
\end{equation}%
with terminal condition%
\begin{equation}
H(z,T)=I\left( e^{-z}\right) .  \label{heat-terminal}
\end{equation}%
For each $t\in \left[ 0,T\right] ,$ it is strictly increasing and of full
range, 
\begin{equation}
\lim_{z\downarrow -\infty }H\left( z,t\right) =0\textrm{ \  \  \ and \  \ }%
\lim_{z\uparrow \infty }H\left( z,t\right) =\infty .\textrm{\ }
\label{H-fullrange}
\end{equation}%
Furthermore, the risk tolerance coefficient satisfies%
\begin{equation}
R\left( H\left( z,T\right) \right) =H_{z}\left( z,T\right) .
\label{risktolerance-T-H}
\end{equation}
\end{lemma}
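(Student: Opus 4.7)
The plan is to exploit the strict concavity of $u(\cdot,t)$ together with the Inada conditions inherited by the Merton value function to define $H$ as the implicit inverse of $u_x$, and then push the HJB equation for $u$ through this inversion to recover the heat equation.

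First I would verify that, for each $t\in[0,T]$, the map $x\mapsto u_x(x,t)$ is a strictly decreasing $C^\infty$ bijection of $(0,\infty)$ onto $(0,\infty)$. Strict decrease is immediate from $u_{xx}<0$, and surjectivity follows from the standard fact that the value function inherits the Inada conditions of $U$ (see \cite{karatzas87}). The implicit function theorem then uniquely defines $H(z,t)$ via (\ref{hh}), with regularity $C^{\infty,1}(\mathbb{D})$ inherited from $u$. The terminal condition $H(z,T)=I(e^{-z})$ is immediate from $u(\cdot,T)=U$ upon inverting $U'$.

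The core of the argument is to derive (\ref{1heat}) by implicit differentiation of (\ref{hh}). Differentiating in $z$ gives
\[
u_{xx}(H,t)\,H_z \;=\; -e^{-z-\tfrac{1}{2}|\lambda|^2(T-t)} \;=\; -u_x(H,t),
\]
so $H_z = -u_x/u_{xx}\big|_{(H,t)} = r(H,t) > 0$; this immediately yields the strict monotonicity of $H(\cdot,t)$, and combined with the Inada limits of $u_x(\cdot,t)$ it also yields the full-range property (\ref{H-fullrange}). Differentiating the relation $H_z = -u_x/u_{xx}|_{(H,t)}$ once more in $z$ produces
\[
H_{zz} \;=\; \frac{u_x}{u_{xx}} \;-\; \frac{u_x^{2}\,u_{xxx}}{u_{xx}^{3}}\bigg|_{(H,t)}.
\]
Differentiating (\ref{hh}) in $t$ instead yields
\[
u_{xx}(H,t)\,H_t + u_{xt}(H,t) \;=\; \tfrac{1}{2}|\lambda|^{2}\,u_x(H,t).
\]
I would then differentiate the HJB equation (\ref{HJB}) once in $x$ to express $u_{xt}$ in terms of $u_x,u_{xx},u_{xxx}$, substitute back, and solve for $H_t$. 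A direct comparison with the formula above for $H_{zz}$ produces $H_t = -\tfrac{1}{2}|\lambda|^{2}H_{zz}$, which is (\ref{1heat}). Finally, (\ref{risktolerance-T-H}) follows by specializing the identity $H_z = -u_x/u_{xx}|_{(H,t)}$ at $t=T$, where $u_x=U'$ and $u_{xx}=U''$.

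The main obstacle is essentially bookkeeping: the heat equation emerges from a forced algebraic cancellation once HJB is differentiated in $x$, but one must carefully track that every derivative of $u$ is evaluated at $(H(z,t),t)$ and that the regularity $u\in C^{3,1}$ on $\mathbb{D}_+$ genuinely licenses these computations. These smoothness facts for the Merton value function are classical and are used as black boxes.
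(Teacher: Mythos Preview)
Your proposal is correct and follows essentially the same route as the paper: define $H$ via invertibility of $u_x(\cdot,t)$ (strict concavity plus Inada), then obtain the heat equation by implicit differentiation of (\ref{hh}) combined with the HJB equation, with monotonicity, full range, and (\ref{risktolerance-T-H}) dropping out along the way. You in fact supply more of the intermediate computations than the paper, which merely asserts that (\ref{1heat}) ``follows directly from (\ref{HJB}) and (\ref{hh}).''
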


\begin{proof}
The fact that $H$ is well defined follows from the spatial invertibility of $%
u.$ The regularity of $H$ and the fact that it solves the heat equation
follows directly from (\ref{HJB})\ and (\ref{hh}).

The existence and uniqueness of solutions to (\ref{1heat})\ follow from the
terminal datum (\ref{heat-terminal}) and property (\ref{I-decay}) which
yields $H\left( x,T\right) \leq C\left( 1+e^{\delta x}\right) $ (see, for
example, \cite{protter}, \cite{book}).

The monotonicity of $H$ follows from the strict spatial concavity of $u$. To
show (\ref{H-fullrange}), we use (\ref{hh}) and that the value function $%
u\left( x,t\right) $ satisfies, for $t\in \left[ 0,T\right] ,$ the Inada
conditions (see, for example, \cite{karatzas87}). Equality (\ref%
{risktolerance-T-H}) follows directly from (\ref{hh}) and (\ref{r-def}).
\end{proof}

\begin{lemma}
Assume inequalities (\ref{I-decay}), (\ref{I-1}) and (\ref{I-2}) hold, and
that $H$ solves (\ref{1heat}) and (\ref{heat-terminal}). Then, for $\left(
z,t\right) \in \mathbb{D}$ the following assertions hold for some positive
constants $M_{n},n_{n},N_{n}$, $n=1,2,3.$

i) The functions $\frac{\partial ^{n}H\left( z,t\right) }{\partial z^{n}}$
solve the heat equation (\ref{1heat}) with 
\begin{equation}
0<\frac{\partial ^{n}H\left( z,T\right) }{\partial z^{n}}\leq
M_{n}(1+e^{\delta z}),\textrm{ }n=1,2,  \label{H-terminals}
\end{equation}%
and%
\begin{equation}
\left \vert \frac{\partial ^{3}H\left( z,T\right) }{\partial z^{3}}\right
\vert \leq M_{3}(1+e^{\delta z}),  \label{H-terminal-third}
\end{equation}%
where $\delta $ as in (\ref{I-decay}).

ii) The following inequalities hold%
\begin{equation}
n_{n}\frac{\partial ^{n-1}H\left( z,t\right) }{\partial z^{n-1}}\leq \frac{%
\partial ^{n}H\left( z,t\right) }{\partial z^{n}}\leq N_{n}\frac{\partial
^{n-1}H\left( z,t\right) }{\partial z^{n-1}},\textrm{ }n=1,2,
\label{H-inequalities}
\end{equation}%
and 
\begin{equation}
\left \vert H_{zzz}\left( z,t\right) \right \vert \leq N_{3}\left \vert
H_{zz}\left( z,t\right) \right \vert .  \label{H-inequalities-third}
\end{equation}
\end{lemma}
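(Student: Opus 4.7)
The plan is the following. For part (i), since $u \in C^{\infty,1}(\mathbb{D}_+)$, Lemma \ref{helpus} gives $H \in C^{\infty,1}(\mathbb{D})$, so differentiating (\ref{1heat}) in $z$ shows that every $\partial^n H/\partial z^n$ is again a classical solution of the heat equation. Direct differentiation of the terminal datum $H(z,T) = I(e^{-z})$ yields, with $y = e^{-z}$,
\[
H_z(z,T) = -yI'(y), \quad H_{zz}(z,T) = yI'(y) + y^2 I''(y),
\]
\[
H_{zzz}(z,T) = -yI'(y) - 3y^2 I''(y) - y^3 I'''(y).
\]
Since $I' < 0$, $I'' > 0$ and $c_2 > 1$ in (\ref{I-2}), the first two terminal expressions are strictly positive, and the growth bounds (\ref{H-terminals})-(\ref{H-terminal-third}) follow by dominating $y|I'(y)|$, $y^2 I''(y)$ and $y^3|I'''(y)|$ by constant multiples of $I(y)$ through (\ref{I-1})-(\ref{I-2}), and then invoking (\ref{I-decay}).

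For part (ii), the strategy is to first establish the inequalities pointwise at $t = T$, and then propagate them to all $t \in [0,T]$ via the maximum principle. At $t=T$, condition (\ref{I-1}) gives $c_1 I(y) \leq -yI'(y) \leq C_1 I(y)$, that is, $c_1 H(z,T) \leq H_z(z,T) \leq C_1 H(z,T)$, so we may take $n_1 = c_1$, $N_1 = C_1$. Writing $H_{zz}(z,T) = yI'(y) + y \cdot yI''(y)$ and using $c_2|I'(y)| \leq yI''(y) \leq C_2|I'(y)|$ from (\ref{I-2}), one obtains
\[
(c_2-1)H_z(z,T) \leq H_{zz}(z,T) \leq (C_2-1)H_z(z,T),
\]
so $n_2 = c_2 - 1 > 0$ and $N_2 = C_2 - 1$; the standing assumption $c_2 > 1$ is essential for a strictly positive lower constant. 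Finally, combining the identity $y^2 I''(y) = H_{zz}(z,T) + H_z(z,T)$ with the third-derivative control $|yI'''(y)| \leq C_3 I''(y)$ from (\ref{I-2}), a short calculation yields $|H_{zzz}(z,T)| \leq K H_{zz}(z,T)$ for a constant $K$ depending only on the data in (\ref{I-1})-(\ref{I-2}).

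With these terminal bounds in hand, the propagation is standard: each of the functions $N_n\, \partial^{n-1}H/\partial z^{n-1} - \partial^n H/\partial z^n$ and $\partial^n H/\partial z^n - n_n\, \partial^{n-1}H/\partial z^{n-1}$ (for $n=1,2$), as well as $K H_{zz} \pm H_{zzz}$, solves (\ref{1heat}), has nonnegative terminal datum by the previous step, and inherits the exponential growth control from part (i). This growth is dominated by any Gaussian $e^{az^2}$, so the Widder--Tychonoff maximum principle for the backward heat equation on the strip $\mathbb{D}$ applies and ensures nonnegativity throughout $\mathbb{D}$, thereby giving (\ref{H-inequalities}) and (\ref{H-inequalities-third}) (with $N_3 = K$). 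The main (and essentially only) delicate points are the careful sign bookkeeping in the terminal derivative expressions---where $c_2 > 1$ is what keeps $n_2$ strictly positive---and a brief verification that the exponential bound of part (i) lies inside the uniqueness/comparison class for the maximum principle on the unbounded domain $\mathbb{D}$.
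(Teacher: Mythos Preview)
Your proposal is correct and follows essentially the same approach as the paper: compute the terminal derivatives explicitly from $H(z,T)=I(e^{-z})$, verify positivity and the exponential growth bounds there using (\ref{I-decay})--(\ref{I-2}), and then propagate the resulting inequalities to all $t\in[0,T]$ by the comparison/maximum principle for the heat equation. The only minor differences are cosmetic: you make the constants $n_n,N_n$ explicit and name the Widder--Tychonoff principle, while the paper deduces $H_{zz}(z,T)>0$ via the relation $H_{zz}=H_zR'(H)$ and the standing assumption $R'>0$ rather than from $c_2>1$ directly.
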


\begin{proof}
i) The fact that the partial derivatives $\frac{\partial ^{n}H\left(
z,t\right) }{\partial z^{n}}$ solve the heat equation follows directly from (%
\ref{1heat}). To show (\ref{H-terminals}) recall that $H_{z}\left(
z,T\right) =-e^{-z}I^{\prime }\left( e^{-z}\right) >0$. Furthermore, (\ref%
{heat-terminal}) and (\ref{I-decay}) yield 
\[
H_{z}\left( z,T\right) =\left \vert e^{-z}I^{\prime }\left( e^{-z}\right)
\right \vert \leq C_{1}I\left( e^{-z}\right) \leq C_{1}C(1+e^{\delta z}), 
\]%
where we used (\ref{I-decay}) and (\ref{I-1}).

For $n=2,$ observe that (\ref{risktolerance-T-H}) yields $H_{z}\left(
z,T\right) R^{\prime }\left( H\left( z,T\right) \right) =H_{zz}\left(
z,T\right) .$ Using the assumption that $R^{\prime }>0$ and the full range
of $H,$ we deduce that $H_{zz}\left( z,T\right) >0.$ Furthermore, 
\[
H_{zz}\left( z,T\right) =e^{-z}I^{\prime }\left( e^{-z}\right)
+e^{-2z}I^{\prime \prime }\left( e^{-z}\right) <e^{-2z}I^{\prime \prime
}\left( e^{-z}\right) 
\]%
\[
\leq C_{2}\left \vert e^{-z}I^{\prime }\left( e^{-z}\right) \right \vert
\leq C_{2}C_{1}C(1+e^{\delta z}), 
\]%
where we used (\ref{I-1}), (\ref{I-2})\ and (\ref{I-decay}). Inequality (\ref%
{H-terminal-third}) follows similarly.

ii)\ Because of inequalities (\ref{H-terminals}) and (\ref{H-terminal-third}%
), we have that comparison holds for the heat equation satisfied by $%
H_{z},H_{zz}$ and $H_{zzz}$. Therefore, to show (\ref{H-inequalities}) and (%
\ref{H-inequalities-third}), it suffices to establish them for $t=T$ only,
which follows from direct differentiation of (\ref{heat-terminal}) and
repeated use of (\ref{I-1}) and (\ref{I-2}).
\end{proof}

\bigskip

We note that similar bounds to (\ref{H-terminal-third}) for higher order
partial derivatives, $\frac{\partial ^{n}H\left( z,T\right) }{\partial z^{n}}%
,$ $n>3,$ can be obtained if one imposes analogous to (\ref{I-1}) and (\ref%
{I-2}) inequalities for the partial derivatives $I^{\left( n\right) }\left(
x\right) $. Then, using that $\frac{\partial ^{n}H\left( z,t\right) }{%
\partial z^{n}}$ also satisfy the heat equation, we can deduce analogous to (%
\ref{H-inequalities-third}) bounds. These results can be strengthened if
further information about the sign of $\frac{\partial ^{n}H\left( z,T\right) 
}{\partial z^{n}}$ is known. We revert to such cases in section 5.

\section{Black's equation for the risk tolerance}

We start with the derivation of the Black's equation, and study questions on
existence, uniqueness and regularity of its solution.

\begin{proposition}
Let $r\left( x,t\right) $ be the risk tolerance function and $H\left(
x,t\right) $ the solution to the heat question (\ref{1heat}) and (\ref%
{heat-terminal}). Then, the following assertions hold.

i) For $\left( x,t\right) \in \mathbb{D}_{+},$ 
\begin{equation}
r\left( x,t\right) =H_{z}\left( H^{\left( -1\right) }\left( x,t\right)
,t\right) .  \label{r-H}
\end{equation}

ii) Furthermore, $r\left( x,t\right) \in C^{\infty ,1}\left( \mathbb{D}%
_{+}\right) $ and solves the nonlinear equation%
\begin{equation}
r_{t}+\frac{1}{2}\left \vert \lambda \right \vert ^{2}r^{2}r_{xx}=0\textrm{,}
\label{fastdiffusion}
\end{equation}%
with $r\left( x,T\right) =R\left( x\right) $ and $r\left( 0,t\right) =0,$ $%
t\in \left[ 0,T\right] .$
\end{proposition}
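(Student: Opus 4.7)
The plan is to derive all three claims directly from the nonlinear transformation built in Lemma \ref{helpus}. First, for (i), I would differentiate the defining identity $u_x(H(z,t),t)=e^{-z-\frac{1}{2}|\lambda|^2(T-t)}$ with respect to $z$, obtaining $u_{xx}(H(z,t),t)\,H_z(z,t)=-e^{-z-\frac{1}{2}|\lambda|^2(T-t)}=-u_x(H(z,t),t)$. Dividing yields $H_z(z,t)=-u_x(H(z,t),t)/u_{xx}(H(z,t),t)=r(H(z,t),t)$. Since $H(\cdot,t)$ is strictly increasing and of full range on $\mathbb{R}_+$ (cf. (\ref{H-fullrange})), it is invertible, so substituting $z=H^{(-1)}(x,t)$ gives (\ref{r-H}).

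The regularity assertion in (ii) then follows from Lemma \ref{helpus}: $H\in C^{\infty,1}(\mathbb{D})$ with $H_z>0$, so by the inverse function theorem $H^{(-1)}(\cdot,t)\in C^{\infty}(\mathbb{R}_+)$ with joint $C^{\infty,1}$ regularity in $(x,t)$, and hence the composition $r(x,t)=H_z(H^{(-1)}(x,t),t)$ inherits this regularity on $\mathbb{D}_+$.

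For the PDE, I would compute the relevant derivatives of $r$ via the chain rule on (\ref{r-H}). Writing $z=z(x,t):=H^{(-1)}(x,t)$, implicit differentiation of $H(z,t)=x$ gives $z_x=1/H_z$ and $z_t=-H_t/H_z$. A routine calculation then produces
\begin{equation*}
r_x=\frac{H_{zz}}{H_z},\qquad r_{xx}=\frac{H_{zzz}H_z-H_{zz}^2}{H_z^3},\qquad r_t=H_{zt}-\frac{H_{zz}H_t}{H_z}.
\end{equation*}
Substituting the heat equation (\ref{1heat}) twice (for $H_t$ and, after differentiating it in $z$, for $H_{zt}$), both replacements contribute $-\tfrac{1}{2}|\lambda|^2H_{zzz}+\tfrac{1}{2}|\lambda|^2 H_{zz}^2/H_z$ for $r_t$, which exactly cancels $\tfrac{1}{2}|\lambda|^2 r^2 r_{xx}=\tfrac{1}{2}|\lambda|^2(H_{zzz}-H_{zz}^2/H_z)$ (using $r^2=H_z^2$). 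This yields (\ref{fastdiffusion}).

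Finally, I would dispatch the boundary data. The terminal condition $r(x,T)=R(x)$ is immediate: combining (\ref{r-H}) at $t=T$ with (\ref{risktolerance-T-H}) gives $r(x,T)=H_z(H^{(-1)}(x,T),T)=R(H(H^{(-1)}(x,T),T))=R(x)$. For $r(0,t)=0$, I would invoke the lower and upper bounds $n_1 H\le H_z\le N_1 H$ from (\ref{H-inequalities}): since $H(z,t)\downarrow 0$ as $z\downarrow -\infty$ by (\ref{H-fullrange}), these bounds force $H_z(z,t)\to 0$ on the same limit; as $x\to 0^+$ corresponds to $H^{(-1)}(x,t)\to -\infty$, the representation (\ref{r-H}) gives $r(0,t)=0$. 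The only non-routine point is this last boundary value, which is exactly where the regularity estimates of Lemma 2 earn their keep; everything else is just chain rule bookkeeping aligned with (\ref{1heat}).
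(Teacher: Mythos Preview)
Your proof is correct and follows essentially the same route as the paper: both derive (\ref{r-H}) directly from the transformation (\ref{hh}) and then obtain (\ref{fastdiffusion}) by chain-rule differentiation of the identity $r(H(z,t),t)=H_z(z,t)$ combined with the heat equation for $H$ and $H_z$; the paper organizes the computation by differentiating the implicit relation, you by differentiating the explicit composition $r=H_z\circ H^{(-1)}$, but these are the same calculation. The one genuine difference is the boundary value $r(0,t)=0$: the paper simply cites \cite{Xia}, whereas you give a self-contained argument via the sandwich $n_1 H\le H_z\le N_1 H$ from (\ref{H-inequalities}) together with (\ref{H-fullrange}), which is a clean improvement.
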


\begin{proof}
Property (\ref{r-H}) follows directly from (\ref{r-def}) and transformation (%
\ref{hh}). To show (\ref{fastdiffusion}), we have from (\ref{r-H})\ that 
\begin{equation}
r\left( H\left( z,t\right) ,t\right) =H_{z}\left( z,t\right) .
\label{r-H-auxiliary}
\end{equation}%
Differentiating twice yields, with all arguments of $H$ and its derivatives
evaluated at $\left( z,t\right) ,$%
\[
H_{t}r_{x}\left( H,t\right) +r_{t}\left( H,t\right) =H_{zt},\textrm{ \ }%
H_{z}r_{x}\left( H,t\right) =H_{zz} 
\]%
and 
\[
H_{zz}r_{x}\left( H,t\right) +H_{z}^{2}r_{xx}\left( H,t\right) =H_{zzz}. 
\]%
Using that $H_{z}$ solves (\ref{1heat}) we deduce 
\[
H_{t}r_{x}\left( H,t\right) +r_{t}\left( H,t\right) +\frac{1}{2}\left(
H_{zz}r_{x}\left( H,t\right) +H_{z}^{2}r_{xx}\left( H,t\right) \right) =0, 
\]%
and rearranging terms we obtain%
\[
r_{t}\left( H,t\right) +\frac{1}{2}\left \vert \lambda \right \vert
^{2}H_{z}^{2}r_{xx}\left( H,t\right) +r_{x}\left( H,t\right) \left( H_{t}+%
\frac{1}{2}\left \vert \lambda \right \vert ^{2}H_{zz}\right) =0. 
\]%
From (\ref{1heat}) we then get 
\[
r_{t}\left( H,t\right) +\frac{1}{2}\left \vert \lambda \right \vert
^{2}r^{2}\left( H,t\right) r_{xx}\left( H,t\right) =0, 
\]%
and using that, for each $t,$ the function $H\left( \cdot ,t\right) $ is of
full range we conclude. For the values of $r\left( 0,t\right) ,$ see \cite%
{Xia}.
\end{proof}

\bigskip

Next, we derive various regularity estimates on the derivatives of the risk
tolerance and of the ratio $r\left( x,t\right) /x.$ Other estimates may be
found in \cite{fouque-hu}.

\begin{proposition}
Assume that the inverse marginal utility function $I$ satisfies (\ref{I-1})
and (\ref{I-2}). Then, there exist positive constants $%
k_{n},l_{n},K_{n},L_{n},K,m,M,$ such that the following assertions hold.

i) For $n=0,1,$%
\begin{equation}
k_{n}\leq x^{n-1}\frac{\partial ^{n}r\left( x,t\right) }{\partial x^{n}}\leq
K_{n}\textrm{ \  \ and \ }\left \vert xr_{xx}\left( x,t\right) \right \vert
\leq K_{2},  \label{spatial-estimate}
\end{equation}%
and, for $n=0,1,2,$%
\begin{equation}
l_{n}\leq \left \vert x^{n}\frac{\partial ^{n}}{\partial x^{n}}\left( \frac{%
r\left( x,t\right) }{x}\right) \right \vert \leq L_{n}.
\label{Spatial-relative-estimate}
\end{equation}%
Furthermore, 
\begin{equation}
\left \vert r_{xx}^{2}(x,t)\right \vert \leq K.  \label{K-bound}
\end{equation}

ii)\ For each $x>0,$ $t\in \left[ 0,T\right) ,$%
\begin{equation}
mx\leq \left \vert r_{t}\left( x,t\right) \right \vert \leq Mx.
\label{r-time-derivative}
\end{equation}
\end{proposition}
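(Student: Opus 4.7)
The whole argument funnels through the identity $r(H(z,t),t)=H_{z}(z,t)$ from Proposition 3(i), which converts non-linear quantities in $r$ into rational combinations of derivatives of the harmonic function $H$, where Lemma 2 supplies the needed two-sided bounds. The plan is therefore to differentiate this identity in $z$ to read off closed expressions for $r_{x}$ and $r_{xx}$ at the point $x=H(z,t)$, use the full-range property \eqref{H-fullrange} to let $z$ sweep $\mathbb{R}_{+}$ as $x$ sweeps $\mathbb{R}_{+}$, and then substitute the inequalities \eqref{H-inequalities} and \eqref{H-inequalities-third}.

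Explicitly, a first $z$-differentiation gives $r_{x}(H,t)=H_{zz}/H_{z}$ and a second yields $r_{xx}(H,t)=(H_{zzz}-H_{zz}^{2}/H_{z})/H_{z}^{2}$. Reading off bounds for \eqref{spatial-estimate}: the case $n=0$ uses $r(x,t)/x=H_{z}/H$, which is pinched between $n_{1}$ and $N_{1}$; the case $n=1$ reduces to $n_{2}\leq H_{zz}/H_{z}\leq N_{2}$; and for $|xr_{xx}|$ I would factor $xr_{xx}(H,t)=(H/H_{z})(H_{zzz}/H_{z}-H_{zz}^{2}/H_{z}^{2})$ and apply \eqref{H-inequalities}, \eqref{H-inequalities-third}. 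For \eqref{Spatial-relative-estimate} I differentiate $r/x=H_{z}/H$ in $z$ and convert via $\partial_{x}=H_{z}^{-1}\partial_{z}$; writing $x(r/x)_{x}=r_{x}-r/x=H_{zz}/H_{z}-H_{z}/H$ and $x^{2}(r/x)_{xx}$ as an analogous rational polynomial in $H,H_{z},H_{zz},H_{zzz}$ makes both the upper bounds and the strict positivity reduce to Lemma 2. The bound \eqref{K-bound} is then obtained by using the same expression for $r_{xx}$, noting that the spatial behavior of $H_{z}$ combined with $|H_{zz}|\leq N_{2}H_{z}$ and $|H_{zzz}|\leq N_{3}N_{2}H_{z}$ gives a uniform control as $z$ varies.

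For part (ii), the nonlinear equation \eqref{fastdiffusion} gives $r_{t}=-\tfrac{1}{2}|\lambda|^{2}r^{2}r_{xx}$, so the upper bound $|r_{t}|\leq Mx$ is immediate: $r^{2}\leq K_{0}^{2}x^{2}$ by \eqref{spatial-estimate} with $n=0$, and $|xr_{xx}|\leq K_{2}$ gives $r^{2}|r_{xx}|\leq K_{0}^{2}K_{2}x$. Equivalently, in the $H$-variables, $r^{2}r_{xx}(H,t)=H_{zzz}-H_{zz}^{2}/H_{z}$, and pairing this with $H\geq H_{z}/N_{1}$ reproduces the same upper bound in a clean form.

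The delicate part, and the one I expect to be the main obstacle, is the lower bound $|r_{t}|\geq mx$. Rewriting $r^{2}r_{xx}(H,t)=H_{z}\,\partial_{z}(H_{zz}/H_{z})$, the required estimate becomes a uniform lower bound $|\partial_{z}(H_{zz}/H_{z})|\geq c>0$, i.e.\ a non-degeneracy statement for the ratio $H_{zz}/H_{z}$, which itself is confined to the compact interval $[n_{2},N_{2}]$ by \eqref{H-inequalities}. I would try to obtain this by combining the strict inequality $c_{2}>1$ from \eqref{I-2} (which at $t=T$ forces $\partial_{z}(H_{zz}/H_{z})$ to have a definite sign) with the parabolic comparison principle applied to the heat equations satisfied by $H_{z}$, $H_{zz}$ and $H_{zzz}$, so that the terminal non-degeneracy propagates to $t<T$. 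This is the step where the transformation \eqref{r-H-preliminary} pays off most, because without it the corresponding statement on $r^{2}r_{xx}$ seems hard to extract directly from the fully non-linear equation \eqref{fastdiffusion}.
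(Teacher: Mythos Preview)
Your overall strategy matches the paper's exactly: everything is pulled back through $r(H,t)=H_{z}$ to rational expressions in $H,H_{z},H_{zz},H_{zzz}$, and then Lemma~2 is invoked. The derivations of $r_{x}(H,t)=H_{zz}/H_{z}$, of $H_{z}\,r_{xx}(H,t)=H_{zzz}/H_{z}-(H_{zz}/H_{z})^{2}$, and of the upper bound in part~(ii) via \eqref{fastdiffusion} are all as in the paper.

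The one place where your argument as written has a gap is the bound \eqref{K-bound}. The quantity $r^{2}_{xx}$ in the statement denotes $(r^{2})_{xx}$, not $(r_{xx})^{2}$; the latter is \emph{not} uniformly bounded, since $|xr_{xx}|\leq K_{2}$ only gives $|r_{xx}|\leq K_{2}/x$, which blows up at the origin. The paper handles \eqref{K-bound} by differentiating $r^{2}(H,t)=H_{z}^{2}$ twice in $z$ to obtain directly $(r^{2})_{xx}(H,t)=2H_{zzz}/H_{z}$, which is bounded by \eqref{H-inequalities} and \eqref{H-inequalities-third} in one line. Your sentence ``using the same expression for $r_{xx}$\ldots'' would recover this only after writing $(r^{2})_{xx}=2rr_{xx}+2r_{x}^{2}$ and checking that the dangerous $1/H_{z}$ factor in $r_{xx}$ is cancelled by the factor $r=H_{z}$; as stated it is too vague to count. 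This identity matters because $(r^{2})_{xx}\leq K$ is precisely the semi-super-harmonicity used critically in the uniqueness proof (Proposition~\ref{propt3b}).

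On part~(ii): the paper in fact only establishes the upper bound $|r_{t}|\leq Mx$, by the same route you outline; the lower bound $mx\leq |r_{t}|$ is stated but not argued in the proof. So your discussion of the non-degeneracy of $\partial_{z}(H_{zz}/H_{z})$ goes beyond what the paper supplies.
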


\begin{proof}
i) We first show (\ref{spatial-estimate}) for $n=0.$ From (\ref%
{r-H-auxiliary}) and (\ref{H-inequalities}) we have 
\[
\frac{r\left( H\left( z,t\right) ,t\right) }{H\left( z,t\right) }=\frac{%
H_{z}\left( z,t\right) }{H\left( z,t\right) }\leq N_{1}, 
\]%
and the upper bound follows setting $K_{0}=N_{1}$. The lower bound follows
similarly.

Moreover, from (\ref{r-H}) and (\ref{H-inequalities}) we deduce 
\[
r_{x}\left( H\left( z,t\right) ,t\right) =\frac{H_{zz}\left( z,t\right) }{%
H_{z}\left( z,t\right) }\leq N_{2}, 
\]%
and thus the upper bound in (\ref{spatial-estimate})\ holds for $K_{1}=N_{2}$%
.

For $n=2,$ 
\begin{equation}
H_{z}\left( z,t\right) r_{xx}\left( H\left( z,t\right) ,t\right) =\frac{%
H_{zzz}\left( z,t\right) }{H_{z}\left( z,t\right) }-\left( \frac{%
H_{zz}\left( z,t\right) }{H_{z}\left( z,t\right) }\right) ^{2}
\label{r-xx-above}
\end{equation}%
and thus%
\[
\left \vert H\left( z,t\right) r_{xx}\left( H\left( z,t\right) ,t\right)
\right \vert \leq \frac{H\left( z,t\right) }{H_{z}\left( z,t\right) }\left(
\left \vert \frac{H_{zzz}\left( z,t\right) }{H_{z}\left( z,t\right) }\right
\vert +\left( \frac{H_{zz}\left( z,t\right) }{H_{z}\left( z,t\right) }%
\right) ^{2}\right) 
\]%
and the bound follows easily.

For (\ref{K-bound}), it suffices to observe, using (\ref{r-H-auxiliary}),
that%
\[
r_{xx}^{2}\left( H\left( z,t\right) ,t\right) =2\frac{H_{zzz}\left(
z,t\right) }{H_{z}\left( z,t\right) }, 
\]%
and we easily conclude using (\ref{H-inequalities}) and (\ref%
{H-inequalities-third}).

ii) From equation (\ref{fastdiffusion}) and equality (\ref{r-xx-above})
above, we have%
\[
r_{t}\left( H,t\right) =-\frac{1}{2}r\left( H,t\right) H_{z}r_{xx}\left(
H,t\right) 
\]%
\[
=-\frac{1}{2}H_{z}\left( \frac{H_{zzz}}{H_{z}}-\left( \frac{H_{zz}}{H_{z}}%
\right) ^{2}\right) . 
\]%
Thus,%
\[
\left \vert \frac{r_{t}\left( H,t\right) }{H}\right \vert \leq \frac{1}{2}%
\frac{H_{z}}{H}\left( \left \vert \frac{H_{zzz}}{H_{z}}\right \vert +\left( 
\frac{H_{zz}}{H_{z}}\right) ^{2}\right) , 
\]%
and the upper bound in (\ref{r-time-derivative}) follows.
\end{proof}

\subsection{Uniqueness of solutions}

\label{chap2robust}

Formula (\ref{r-H})\ yields the existence of smooth solutions to the risk
tolerance equation (\ref{fastdiffusion}). We next investigate the uniqueness.

To our knowledge, this question has been investigated only by Xia in \cite%
{Xia}, using an approximating sequence of penalized versions of equation (%
\ref{fastdiffusion}) and duality arguments to obtain comparison of their
solutions. As he mentions (see Remark 4.3 in \cite{Xia}), it is quite
difficult to obtain comparison results directly from the equation itself.

Herein, we provide such a result. The key idea is to consider an auxiliary
equation, specifically, the one satisfied by the square of the risk
tolerance function (cf. (\ref{eq-F})), and establish comparison for this
equation instead. The comparison result for (\ref{fastdiffusion}) would then
follow using the positivity of the risk tolerance functions.

\begin{proposition}
\label{propt3b} Let $I_{1},I_{2}$ be inverse marginal utility functions
satisfying (\ref{I-decay}),(\ref{I-1}) and (\ref{I-2}), and let $R_{1}$ and $%
R_{2}$ be the associated risk tolerance coefficients, satisfying, for $x\geq
0,$ 
\begin{equation}
R_{1}\left( x\right) \leq R_{2}\left( x\right) .  \label{R-inequality}
\end{equation}%
Then, for $\left( x,t\right) \in \mathbb{D}_{+},$ 
\begin{equation}
r_{1}\left( x,t\right) \leq r_{2}\left( x,t\right) \textrm{,}
\label{r-comparison}
\end{equation}%
with $r_{1},r_{2}$ solving (\ref{fastdiffusion}), with $r_{1}\left(
x,T\right) =R_{1}\left( x\right) ,r_{2}\left( x,T\right) =R_{2}\left(
x\right) .$
\end{proposition}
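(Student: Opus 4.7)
I would prove the comparison result by working with the equation satisfied by $F_i := r_i^2$ rather than with $r_i$ itself, as suggested in the discussion preceding the statement. The first step is purely computational: substituting $F_{i,x} = 2 r_i r_{i,x}$ and $F_{i,xx} = 2 r_{i,x}^2 + 2 r_i r_{i,xx}$ into (\ref{fastdiffusion}) shows that each $F_i$ satisfies the semilinear equation
\[
F_{i,t} + \frac{1}{2} |\lambda|^2 F_i F_{i,xx} - \frac{1}{4} |\lambda|^2 F_{i,x}^2 = 0,
\]
with $F_i(x,T) = R_i(x)^2$ and $F_i(0,t) = 0$.

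Next I would set $w := F_1 - F_2$ and subtract the two PDEs. The algebraic identities $F_1 F_{1,xx} - F_2 F_{2,xx} = w F_{1,xx} + F_2 w_{xx}$ and $F_{1,x}^2 - F_{2,x}^2 = (F_{1,x} + F_{2,x}) w_x$ turn the difference into a \emph{linear} (degenerate) backward parabolic equation for $w$,
\[
w_t + \frac{1}{2} |\lambda|^2 F_2 w_{xx} - \frac{1}{4} |\lambda|^2 (F_{1,x} + F_{2,x}) w_x + \frac{1}{2} |\lambda|^2 F_{1,xx} w = 0,
\]
with $w(0,t) = 0$ and $w(x,T) = R_1(x)^2 - R_2(x)^2 \leq 0$ by the hypothesis and the positivity of the $R_i$. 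Since $r_1,r_2 \geq 0$, proving $w \leq 0$ is equivalent to proving $r_1 \leq r_2$.

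The crux is then a maximum-principle argument for this linear equation. The estimates of the preceding Proposition --- $k_0 x \leq r_i(x,t) \leq K_0 x$, boundedness of $r_{i,x}$, and $|x r_{i,xx}| \leq K_2$ --- translate into quantitative control of all coefficients: $F_2$ is comparable to $x^2$, $F_{i,x}$ is $O(x)$, and $F_{i,xx} = 2 r_{i,x}^2 + 2 r_i r_{i,xx}$ is uniformly bounded. I would then pass to the logarithmic spatial variable $y = \log x$, which turns $x^2 w_{xx}$ into $\tilde w_{yy} - \tilde w_y$ and converts the equation into a uniformly parabolic one on $\mathbb{R} \times [0,T]$ with bounded coefficients; in these variables $w$ grows at most exponentially in $y$, well within the Tychonov uniqueness class. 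Reversing time and replacing $w$ by $e^{-A(T-t)} w$ with $A$ larger than $\frac{1}{2}|\lambda|^2 \sup |F_{1,xx}|$ absorbs the zeroth-order term, so the classical maximum principle yields $w \leq 0$, and hence $r_1 \leq r_2$ on $\mathbb{D}_+$.

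The principal obstacle is not the algebra but the analytic control at the two spatial endpoints: the linearized equation for $w$ degenerates at $x = 0$ and is posed on the unbounded half-line, where the coefficient of $w_{xx}$ itself has quadratic growth. The logarithmic substitution is the device that resolves both issues simultaneously, turning a degenerate equation on $\mathbb{R}_+$ with polynomial coefficient growth into a uniformly parabolic equation on $\mathbb{R}$ with bounded coefficients, for which a Tychonov-type comparison theorem applies directly. Without this change of variable, one would instead have to patch together a local barrier at the origin with a separate growth barrier at infinity, which is precisely the kind of lengthy argument the authors' PDE route is meant to avoid.
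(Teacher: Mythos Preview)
Your argument is correct, and it shares its first step with the paper's proof---both derive the semilinear equation for $F=r^{2}$ (this is the paper's (\ref{eq-F})). From that point on, however, the two proofs diverge. The paper does \emph{not} subtract and linearize; instead it follows the device of Fukuda--Ishii--Tsutsumi \cite{fukuda}: reverse time, introduce the damped functions $f_{m}=(1-t/m)r_{1}^{2}$ and $g_{m}=(1-t/m)r_{2}^{2}$ on a short interval $[0,m]$, and test the inequality $f_{m}\leq g_{m}+\varepsilon(1+x^{4})$ by contradiction at an interior maximum. The quartic barrier together with the monotonicity of $r_{i}$, the bound $r_{i}\leq K_{1}x$, and the semi-superharmonicity $F_{xx}\leq K$ of (\ref{K-bound}) force $m\geq (|\lambda|^{2}(6K_{1}^{2}+\tfrac12 K))^{-1}$, contradicting the choice of $m$; a bootstrap then covers $[0,T]$.

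Your route---subtracting to get a linear equation for $w=F_{1}-F_{2}$ and passing to $y=\log x$---exploits exactly the same a priori information (namely $k_{0}x\leq r_{i}\leq K_{0}x$, $r_{i,x}$ bounded, and $|F_{i,xx}|\leq K$) but uses it to show that the transformed equation is uniformly parabolic on $\mathbb{R}$ with bounded coefficients and that $\tilde w$ has at most exponential growth in~$y$; a classical Tychonov-type maximum principle then finishes in one stroke, with no bootstrap in time. The trade-off is that your argument invokes an off-the-shelf comparison theorem for unbounded domains, whereas the paper's proof is entirely self-contained but needs the somewhat ad hoc time-weight $(1-t/m)$ and the specific test function $1+x^{4}$. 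Both arguments hinge on the same key estimate (\ref{K-bound}); your logarithmic substitution is a clean structural way of saying that the degeneracy of the original problem at $x=0$ and the quadratic growth of its leading coefficient at infinity are really one and the same scaling phenomenon.
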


\begin{proof}
We first observe that equation (\ref{fastdiffusion}) yields that $F:=r^{2}$
solves the semilinear equation 
\begin{equation}
F_{t}+\frac{1}{2}\left \vert \lambda \right \vert ^{2}FF_{xx}-\frac{1}{4}%
\left \vert \lambda \right \vert ^{2}F_{x}^{2}=0,\textrm{ \ }  \label{eq-F}
\end{equation}%
$\left( x,t\right) \in \mathbb{D}_{+},$ with $F\left( x,T\right)
=R^{2}\left( x\right) $. To facilitate the exposition we will work with $%
\bar{F}\left( x,t\right) :=F\left( x,T-t\right) $ instead. Then, 
\begin{equation}
\bar{F}_{t}-\frac{1}{2}\left \vert \lambda \right \vert ^{2}\bar{F}\bar{F}%
_{xx}+\frac{1}{4}\left \vert \lambda \right \vert ^{2}\bar{F}_{x}^{2}=0,%
\textrm{ }  \label{eq-comparison}
\end{equation}%
with $\bar{F}\left( x,0\right) =R^{2}\left( x\right) .$ Therefore, the
functions $f\left( x,t\right) $ and $g\left( x,t\right) $ defined as%
\[
f\left( x,t\right) :=r_{1}^{2}\left( x,T-t\right) \textrm{ \  \  \ and \  \  \ }%
g\left( x,t\right) :=r_{2}^{2}\left( x,T-t\right) 
\]%
are (sub-) and (super-) solutions of (\ref{eq-comparison}), 
\[
f_{t}-\frac{1}{2}\left \vert \lambda \right \vert ^{2}ff_{xx}+\frac{1}{4}%
\left \vert \lambda \right \vert ^{2}f_{x}^{2}\leq 0\textrm{ \ and \ }g_{t}-%
\frac{1}{2}\left \vert \lambda \right \vert ^{2}gg_{xx}+\frac{1}{4}\left
\vert \lambda \right \vert ^{2}g_{x}^{2}\geq 0,\textrm{ } 
\]%
with 
\[
f\left( 0,t\right) =g\left( 0,t\right) =0\textrm{ \  \  \ and \  \  \ }f\left(
x,0\right) \leq g\left( x,0\right) . 
\]%
Furthermore, (\ref{K-bound}) yields, for $\left( x,t\right) \in \mathbb{D}%
_{+},$%
\begin{equation}
f_{xx}\left( x,t\right) \leq K\textrm{ \  \ and \ }g_{xx}\left( x,t\right) \leq
K.  \label{SSH-f-g}
\end{equation}%
We are going to establish that, for $t\in \left( 0,T\right) ,$ $x>0,$%
\begin{equation}
f\left( x,t\right) \leq g\left( x,t\right) .  \label{target}
\end{equation}%
For this, we follow parts of the proof of Theorem 3.1 in \cite{fukuda}. To
this end, let 
\begin{equation}
m=\left \{ 
\begin{array}{c}
T,\textrm{ \  \  \  \  \  \  \  \  \  \  \  \  \  \  \  \  \ if \ }T<\frac{1}{\left \vert
\lambda \right \vert ^{2}\left( 6K_{1}^{2}+\frac{1}{2}K\right) } \\ 
\\ 
m_{0}\in \left( 0,\frac{1}{\left \vert \lambda \right \vert ^{2}\left(
6K_{1}^{2}+\frac{1}{2}K\right) }\right) ,\textrm{ otherwise}%
\end{array}%
\right. ,  \label{m-parameter}
\end{equation}%
with $K_{1},K$ as in (\ref{spatial-estimate}) and (\ref{K-bound}),
respectively. Then, $m\leq T.$

Consider the domain $\mathbb{D}_{+}^{m}=\mathbb{R}_{+}\times \left[ 0,m%
\right] $ and introduce, for $\left( x,t\right) \in \mathbb{D}_{+}^{m},$ the
auxiliary functions%
\begin{equation}
f_{m}\left( x,t\right) :=\left( 1-\frac{t}{m}\right) f\left( x,t\right) \  \
\  \  \textrm{and\ }\  \ g_{m}\left( x,t\right) :=\left( 1-\frac{t}{m}\right)
g\left( x,t\right) .  \label{aux-functions}
\end{equation}%
As argued in \cite{fukuda}, a bootstrapping argumentation can be used to
establish (\ref{target}) once it is shown that, for $\left( x,t\right) \in 
\mathbb{D}_{+}^{m},$ the inequality 
\begin{equation}
f_{m}\left( x,t\right) \leq g_{m}\left( x,t\right)  \label{m-target}
\end{equation}%
holds. Next, observe that, for $\left( x,t\right) \in \mathbb{D}_{+}^{m},$
we have 
\begin{equation}
0\leq f_{m}\left( x,t\right) \leq f\left( x,t\right) \textrm{ \  \  \  \ and \  \
\ }0\leq g_{m}\left( x,t\right) \leq g\left( x,t\right) .  \label{f<fm}
\end{equation}%
Moreover,

\begin{equation}
f_{m}\left( x,0\right) \leq g_{m}\left( x,0\right) \textrm{ \  \ and \  \ }%
f_{m}\left( x,m\right) =g_{m}\left( x,m\right) =0,  \label{fm-gm-terminal}
\end{equation}%
and%
\begin{equation}
f_{m}\left( 0,t\right) =g_{m}\left( 0,t\right) =0.  \label{fm-gm-boundary}
\end{equation}%
\ By the assumption that $r_{1}\left( x,t\right) $ and $r_{2}\left(
x,t\right) $ are (sub- and super-) solutions of (\ref{fastdiffusion}), and
the definition of $f_{m}$ and $g_{m}$, we get%
\begin{equation}
\left( 1-\frac{t}{m}\right) f_{m,t}+\frac{1}{m}f_{m}-\frac{1}{2}\left \vert
\lambda \right \vert ^{2}f_{m}f_{m,xx}+\frac{1}{4}\left \vert \lambda \right
\vert ^{2}f_{m,x}^{2}\leq 0  \label{fm-sub}
\end{equation}%
and 
\begin{equation}
\left( 1-\frac{t}{m}\right) g_{m,t}+\frac{1}{m}g_{m}-\frac{1}{2}\left \vert
\lambda \right \vert ^{2}g_{m}g_{m,xx}+\frac{1}{4}\left \vert \lambda \right
\vert ^{2}g_{m,x}^{2}\geq 0.  \label{gm-super}
\end{equation}%
We now establish (\ref{m-target}). To this end, we consider the test
function $\varphi \left( x\right) =1+x^{4},$ $x\geq 0,$ and show that, for
any $\varepsilon >0,$%
\[
f_{m}\left( x,t\right) \leq g_{m}\left( x,t\right) +\varepsilon \varphi
\left( x\right) . 
\]%
We argue by contradiction, assuming that there exists $\varepsilon >0$ such
that 
\begin{equation}
\sup_{\left( x,t\right) \in \mathbb{D}_{+}}\left( f_{m}\left( x,t\right)
-g_{m}\left( x,t\right) -\varepsilon \varphi \left( x\right) \right) >0.
\label{auxuliary-inequality}
\end{equation}

Let $\left( x,t\right) $ be any point such that $f_{m}\left( x,t\right)
-g_{m}\left( x,t\right) -\varepsilon \varphi \left( x\right) >0.$ Using that 
$r_{i}\left( x,t\right) \leq K_{1}x$ (cf. (\ref{spatial-estimate})) and
inequalities (\ref{f<fm})\textbf{, }we deduce that $h\left( x,t\right) \leq
K_{1}^{2}x^{2},$\ for $h=f_{m},g_{m},$\ which together with the growth of $%
\varphi $ yields that $x<\infty .$ Furthermore, we observe that the extremum
in (\ref{auxuliary-inequality}), denoted by $\left( \tilde{x},\tilde{t}%
\right) ,$ is an interior point in $\mathbb{D}_{+}^{m}$. Indeed, if $\left( 
\tilde{x},\tilde{t}\right) $ is such that $\tilde{t}=0$ or $\tilde{t}=m$ we
get a contradiction from (\ref{fm-gm-terminal}), while if, for some $\tilde{t%
}\in \left( 0,m\right) ,$ $\left( \tilde{x},\tilde{t}\right) =\left( 0,%
\tilde{t}\right) ,$ we contradict (\ref{fm-gm-boundary}).

At the interior maximum $\left( \tilde{x},\tilde{t}\right) $ in (\ref%
{auxuliary-inequality}) we then have

\begin{equation}
f_{m}\left( \tilde{x},\tilde{t}\right) -g_{m}\left( \tilde{x},\tilde{t}%
\right) >\varepsilon \left( 1+\tilde{x}^{4}\right)  \label{max}
\end{equation}%
\begin{equation}
f_{m,t}\left( \tilde{x},\tilde{t}\right) -g_{m,t}\left( \tilde{x},\tilde{t}%
\right) =0,\textrm{ \  \ }f_{m,x}\left( \tilde{x},\tilde{t}\right)
-g_{m,x}\left( \tilde{x},\tilde{t}\right) =4\varepsilon \tilde{x}^{3}
\label{e-1}
\end{equation}%
and%
\begin{equation}
f_{m,xx}\left( \tilde{x},\tilde{t}\right) -g_{m,xx}\left( \tilde{x},\tilde{t}%
\right) \leq 12\varepsilon \tilde{x}^{2}.  \label{e-2}
\end{equation}%
\bigskip From (\ref{fm-sub}) and (\ref{gm-super}) we deduce%
\[
\frac{1}{m}\left( f_{m}\left( \tilde{x},\tilde{t}\right) -g_{m}\left( \tilde{%
x},\tilde{t}\right) \right) \leq \frac{1}{2}\left \vert \lambda \right \vert
^{2}\left( f_{m}\left( \tilde{x},\tilde{t}\right) f_{m,xx}\left( \tilde{x},%
\tilde{t}\right) -g_{m}\left( \tilde{x},\tilde{t}\right) g_{m,xx}\left( 
\tilde{x},\tilde{t}\right) \right) 
\]%
\[
-\frac{1}{4}\left \vert \lambda \right \vert ^{2}\left( f_{m,x}^{2}\left( 
\tilde{x},\tilde{t}\right) -g_{m,x}^{2}\left( \tilde{x},\tilde{t}\right)
\right) . 
\]%
In turn,%
\[
\left( \frac{1}{m}-\frac{1}{2}\left \vert \lambda \right \vert
^{2}f_{m,xx}\left( \tilde{x},\tilde{t}\right) \right) \left( f_{m}\left( 
\tilde{x},\tilde{t}\right) -g_{m}\left( \tilde{x},\tilde{t}\right) \right) 
\]%
\[
\leq \frac{1}{2}\left \vert \lambda \right \vert ^{2}g_{m}\left( \tilde{x},%
\tilde{t}\right) \left( f_{m,xx}\left( \tilde{x},\tilde{t}\right)
-g_{m,xx}\left( \tilde{x},\tilde{t}\right) \right) 
\]%
\[
-\frac{1}{4}\left \vert \lambda \right \vert ^{2}\left( f_{m,x}\left( \tilde{%
x},\tilde{t}\right) -g_{m,x}\left( \tilde{x},\tilde{t}\right) \right) \left(
f_{m,x}\left( \tilde{x},\tilde{t}\right) +g_{m,x}\left( \tilde{x},\tilde{t}%
\right) \right) . 
\]%
Inequality (\ref{K-bound}) and the definition of $f_{m}$ yield that $%
f_{m,xx}\left( x,t\right) <K.$ This, in combination with $f_{m}\left( \tilde{%
x},\tilde{t}\right) -g_{m}\left( \tilde{x},\tilde{t}\right) >0$ (cf. (\ref%
{max})) and (\ref{e-1}) and (\ref{e-2}), implies that 
\[
\left( \frac{1}{m}-\frac{1}{2}\left \vert \lambda \right \vert ^{2}K\right)
\left( f_{m}\left( \tilde{x},\tilde{t}\right) -g_{m}\left( \tilde{x},\tilde{t%
}\right) \right) 
\]%
\[
\leq 6\varepsilon \left \vert \lambda \right \vert ^{2}g_{m}\left( \tilde{x},%
\tilde{t}\right) \tilde{x}^{2}-\varepsilon \left \vert \lambda \right \vert
^{2}\tilde{x}^{3}\left( f_{m,x}\left( \tilde{x},\tilde{t}\right)
+g_{m,x}\left( \tilde{x},\tilde{t}\right) \right) . 
\]%
Using that the functions $r_{1}\left( x,t\right) $ and $r_{2}\left(
x,t\right) ,$ and in turn $f_{m}\left( x,t\right) $ and $g_{m}\left(
x,t\right) ,$ are strictly increasing, the above inequality yields that, at $%
\left( \tilde{x},\tilde{t}\right) ,$ we must have 
\[
\varepsilon \left( \frac{1}{m}-\frac{1}{2}\left \vert \lambda \right \vert
^{2}K\right) \left( 1+\tilde{x}^{4}\right) \leq 6\varepsilon \left \vert
\lambda \right \vert ^{2}g_{m}\left( \tilde{x},\tilde{t}\right) \tilde{x}%
^{2}. 
\]%
Finally, using once more that $g_{m}\left( \tilde{x},\tilde{t}\right) \leq
K_{1}^{2}\tilde{x}^{2}$, the above inequality then yields that $\frac{1}{m}-%
\frac{1}{2}\left \vert \lambda \right \vert ^{2}K\leq 6\left \vert \lambda
\right \vert ^{2}K_{1}^{2},$ and thus we must have 
\[
m\geq \frac{1}{\left \vert \lambda \right \vert ^{2}\left( 6K_{1}^{2}+\frac{1%
}{2}K\right) }, 
\]%
which, however, contradicts the choice of $m$ in (\ref{m-parameter}).
\end{proof}

\bigskip

We note that the property $r_{xx}^{2}\left( x,t\right) \leq K$ $(K>0),$
played a crucial role in the above proof. Such functions are frequently
called semi super-harmonic. \ 

We also note that in \cite{Xia} the admissible class of risk tolerance
functions satisfy $R\left( x\right) \leq M\left( 1+x\right) ,$ for $x\geq 0$
and $M>0.$ This property allows for risk tolerances with $\lim_{x\downarrow
0}R^{\prime }\left( x\right) =\infty ,$ which however are excluded herein.
This property makes our admissible class slightly smaller than the one in 
\cite{Xia}. Such a case is, for example, $R\left( x\right) =\sqrt{x}.$

\bigskip

A direct consequence of the above comparison result is the monotonicity of
the risk tolerance function on the market parameter $\left \vert \lambda
\right \vert ^{2}$. Note that while the terminal condition $r\left(
x,T\right) =R\left( x\right) $ is independent on $\left \vert \lambda \right
\vert ^{2},$ this is not the case for $t<T,$ since $r\left( x,t\right) =-$ $%
\frac{u_{x}\left( x,t\right) }{u_{xx}\left( x,t\right) },$ with both $%
u_{x}\left( x,t\right) $ and $u_{xx}\left( x,t\right) $ depending on $\left
\vert \lambda \right \vert ^{2}$ (cf. (\ref{HJB})).

As we demonstrate below, the dependence of $r\left( x,t\right) $ on $\left
\vert \lambda \right \vert ^{2}$ for \textit{all }times depends exclusively
on the \textit{curvature} of the \textit{terminal} condition $R\left(
x\right) .$

\begin{proposition}
Let the absolute risk tolerance coefficient $R\left( x\right) ,$ $x\geq 0,$
be concave (convex). Then, for $\left( x,t\right) \in \mathbb{D}_{+},$ the
risk tolerance function $r\left( x,t\right) $ is decreasing (increasing) in $%
\left \vert \lambda \right \vert ^{2}.$
\end{proposition}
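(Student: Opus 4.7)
The plan is to reduce the $|\lambda|^{2}$-monotonicity of $r(x,t)$ to a sub/super-solution comparison for the semilinear equation (\ref{eq-F}) satisfied by $F=r^{2}$, so as to re-use verbatim the contradiction argument of Proposition \ref{propt3b}, with the curvature of $R$ producing the sign of the perturbation term.

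Fix $|\lambda_{1}|^{2}\leq |\lambda_{2}|^{2}$ and let $r_{1},r_{2}$ be the corresponding solutions of (\ref{fastdiffusion}) under the common terminal condition $R$. In reverse time, set $\bar{F}_{i}(x,t):=r_{i}^{2}(x,T-t)$, so that each $\bar{F}_{i}$ satisfies (\ref{eq-comparison}) with parameter $|\lambda_{i}|^{2}$, the common initial datum $\bar{F}_{i}(x,0)=R^{2}(x)$, and $\bar{F}_{i}(0,t)=0$. Direct differentiation yields the algebraic identity
\[
2FF_{xx}-F_{x}^{2}=4r^{3}r_{xx},\qquad F=r^{2},
\]
and substituting $\bar{F}_{2}$ into (\ref{eq-comparison}) with parameter $|\lambda_{1}|^{2}$, then subtracting the equation it already satisfies with parameter $|\lambda_{2}|^{2}$, produces
\[
\bar{F}_{2,t}-\frac{1}{2}|\lambda_{1}|^{2}\bar{F}_{2}\bar{F}_{2,xx}+\frac{1}{4}|\lambda_{1}|^{2}\bar{F}_{2,x}^{2}=\left(|\lambda_{2}|^{2}-|\lambda_{1}|^{2}\right)\bar{r}_{2}^{\,3}\,\bar{r}_{2,xx}.
\]

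Now one invokes the concavity/convexity inheritance for $r(\cdot,t)$, which is most transparently read off from the transformation (\ref{r-H}): differentiating $r(H(z,t),t)=H_{z}(z,t)$ twice in $z$ gives $H_{z}^{\,3}r_{xx}=H_{zzz}H_{z}-H_{zz}^{\,2}=H_{z}^{\,2}(\log H_{z})_{zz}$, so the sign of $r_{xx}(\cdot,t)$ agrees with that of the log-concavity defect of the heat-equation solution $H_{z}(\cdot,t)$, and at $t=T$ a short chain-rule computation (using $dH(z,T)/dz=R(H(z,T))$) shows this sign equals that of $R''$. In the concave case, log-concavity of $H_{z}(\cdot,T)$ is preserved along the (time-reversed, hence forward) heat flow by the classical Pr\'ekopa--Brascamp--Lieb theorem, giving $\bar{r}_{2,xx}\leq 0$; the right-hand side of the displayed identity is then non-positive, and $\bar{F}_{2}$ is a subsolution of the $|\lambda_{1}|^{2}$-version of (\ref{eq-comparison}). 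Both $\bar{F}_{i}$ enjoy the semi super-harmonic bound (\ref{K-bound}) and the quadratic growth $\bar{F}_{i}\leq K_{1}^{\,2}x^{2}$ implied by (\ref{spatial-estimate}), and share initial and boundary data, so the contradiction argument of Proposition \ref{propt3b}, with test function $\varphi(x)=1+x^{4}$, applies verbatim and delivers $\bar{F}_{2}\leq\bar{F}_{1}$. Taking square roots yields $r_{2}(x,t)\leq r_{1}(x,t)$ on $\mathbb{D}_{+}$, so $r$ decreases in $|\lambda|^{2}$. The convex case is analogous: the corresponding inheritance (by whatever route is available, e.g.\ an $H$-based argument exchanging the roles of concavity and convexity under the sign of $R''$) gives $\bar{r}_{2,xx}\geq 0$, turning $\bar{F}_{2}$ into a supersolution of the $|\lambda_{1}|^{2}$-equation, and the same comparison yields $r_{2}\geq r_{1}$.

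The main obstacle I anticipate is the concavity/convexity inheritance for $r(\cdot,t)$: this is the only non-routine ingredient, and the convex direction is delicate because log-convexity is not preserved by the heat semigroup in general, so some alternative mechanism (maximum principle on a PDE for $r_{xx}$, or a direct argument from (\ref{r-H})) must be supplied. Once that inheritance is in hand, the reduction to Proposition \ref{propt3b} through the identity $2FF_{xx}-F_{x}^{2}=4r^{3}r_{xx}$ is essentially a one-line perturbation.
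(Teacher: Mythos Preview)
Your approach is essentially the paper's: both invoke the concavity/convexity inheritance of Proposition~\ref{propmatrix} to fix the sign of $r_{xx}$, then perturb the coefficient $|\lambda|^{2}$ to produce a sub/super-solution, and finally appeal to the comparison machinery of Proposition~\ref{propt3b}. The paper carries this out at the level of $r$ and equation (\ref{fastdiffusion}), writing
\[
r_{t}(x,t;\lambda')+\tfrac{1}{2}|\lambda|^{2}r^{2}(x,t;\lambda')r_{xx}(x,t;\lambda')=\tfrac{1}{2}\bigl(|\lambda|^{2}-|\lambda'|^{2}\bigr)r^{2}r_{xx}\leq 0,
\]
and then says ``we easily conclude''. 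You instead work at the $F=r^{2}$ level via the identity $2FF_{xx}-F_{x}^{2}=4r^{3}r_{xx}$, which is arguably cleaner since Proposition~\ref{propt3b} is in fact proved for the $F$-equation; your route makes explicit what the paper's last line leaves implicit.

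One correction: your worry that ``the convex direction is delicate because log-convexity is not preserved by the heat semigroup in general'' is misplaced. For positive solutions of the heat equation, log-\emph{convexity} is the easy direction---it follows immediately from H\"older's inequality applied to the Feynman--Kac representation (this is part~(i) of Proposition~\ref{nylogconvex} in the paper). It is log-\emph{concavity} that requires the Pr\'ekopa--Leindler machinery. So the convex case of Proposition~\ref{propmatrix} is no harder than the concave one, and your proof goes through symmetrically without any ``alternative mechanism''.
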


\begin{proof}
Let $\left \vert \lambda \right \vert ^{2}>\left \vert \lambda ^{\prime
}\right \vert ^{2},$ and denote by $r\left( x,t;\lambda \right) $ and $%
r\left( x,t;\lambda ^{\prime }\right) $ the corresponding solutions to (\ref%
{fastdiffusion}) with $\left \vert \lambda \right \vert ^{2},\left \vert
\lambda ^{\prime }\right \vert ^{2}$ being used. We show that, for $\left(
x,t\right) \in \mathbb{D}_{+},$ $r\left( x,t;\lambda \right) \leq r\left(
x,t;\lambda ^{\prime }\right) .$

As it will be established in Proposition \ref{propmatrix}, if $R\left(
x\right) $ is concave, the risk tolerance function $r\left( x,t\right) $ is
also concave, for each $t\in \left[ 0,T\right) .$ Therefore,%
\[
r_{t}\left( x,t;\lambda ^{\prime }\right) +\frac{1}{2}\left \vert \lambda
\right \vert ^{2}r^{2}\left( x,t;\lambda ^{\prime }\right) r_{xx}\left(
x,t;\lambda ^{\prime }\right) 
\]%
\[
=r_{t}\left( x,t;\lambda ^{\prime }\right) +\frac{1}{2}\left \vert \lambda
^{\prime }\right \vert ^{2}r^{2}\left( x,t;\lambda ^{\prime }\right)
r_{xx}\left( x,t;\lambda ^{\prime }\right) 
\]%
\[
+\frac{1}{2}\left( \left \vert \lambda \right \vert ^{2}-\left \vert \lambda
^{\prime }\right \vert ^{2}\right) r^{2}\left( x,t;\lambda ^{\prime }\right)
r_{xx}\left( x,t;\lambda ^{\prime }\right) \leq 0, 
\]%
where we used that $r\left( x,t;\lambda ^{\prime }\right) $ solves (\ref%
{fastdiffusion}) and $r_{xx}\left( x,t;\lambda ^{\prime }\right) \leq 0.$

Therefore, $r\left( x,t;\lambda ^{\prime }\right) $ is a super-solution to
the equation that $r\left( x,t;\lambda \right) $ solves with terminal
condition $r\left( x,T;\lambda \right) =r\left( x,T;\lambda ^{\prime
}\right) ,$ and we easily conclude.
\end{proof}

\section{Properties of solutions of the Black's equation}

\label{chap2mon}

The previous results allow us to derive several properties of the risk
tolerance function $r\left( x,t\right) $ from analogous properties of the
risk tolerance coefficient $R\left( x\right) $. Some of these properties
have been studied in \cite{borell} and \cite{Xia}, but we provide
considerably shorter and more direct proofs. Furthermore, we establish
various new results.

We show that proving these properties amounts to only specifying the sign of
certain nonlinear quantities, which however reduce to much simpler
expressions involving the solutions to the heat equations and its spatial
derivatives. The latter also solve the heat equation, and thus we can, in
turn, use several results for this equation (comparison principle,
preservation of log-convexity/concavity, structure of their zero points
sets, etc.).

We start with the preservation of strict spatial monotonicity. This question
was examined by Arrow \cite{arrow65} who showed that, in a single period
problem with one risky stock, the optimal investment in the latter is
increasing in wealth if and only if the investor's utility exhibits
decreasing absolute risk aversion (\textit{DARA}), and as long as the risk
premium is positive. He, also, showed that the fraction of wealth invested
in the stock, known as the average propensity to invest, is decreasing in
wealth if and only if the utility exhibits increasing relative risk aversion
(\textit{IRRA}). Since this seminal work, it has become common in the
economic literature to frequently assume that the utility exhibits\textit{\
DARA} and \textit{IRRA}; these properties are also known as the \textit{%
Arrow hypothesis}. Similar results were later produced for discrete time
models (see, among others, \cite{Kimball-1}, \cite{lintner}, \cite{mossin}, 
\cite{pratt}, \cite{R-S1} and \cite{R-S2}, as well as \cite{eechodht} and 
\cite{Gollier-book} and references therein).

\begin{proposition}
\label{propt1}Let the absolute risk tolerance coefficient $R\left( x\right)
, $ $x\geq 0$, be strictly increasing. Then, for each $t\in \left[
0,T\right) , $ the risk tolerance function $r\left( x,t\right) $ is also
strictly increasing.
\end{proposition}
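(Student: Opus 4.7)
The plan is to exploit the change-of-variables formula (\ref{r-H}) and reduce the claim to a statement about the sign of $H_{zz}$, which then follows from the heat equation via a simple maximum-principle argument.

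First I would differentiate the identity $r(H(z,t),t)=H_{z}(z,t)$ in $z$ to obtain
\[
r_{x}\bigl(H(z,t),t\bigr)\,H_{z}(z,t)=H_{zz}(z,t),
\]
so, since $H_{z}>0$ by Lemma \ref{helpus} and since $H(\cdot,t)$ is of full range on $\mathbb{R}_{+}$, strict spatial monotonicity of $r(\cdot,t)$ on $\mathbb{R}_{+}$ is equivalent to $H_{zz}(z,t)>0$ for all $z\in \mathbb{R}$.

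Next I would verify this at the terminal time. Differentiating (\ref{risktolerance-T-H}) yields
\[
H_{zz}(z,T)=R^{\prime}\bigl(H(z,T)\bigr)\,H_{z}(z,T),
\]
and the hypothesis $R^{\prime}>0$ together with $H_{z}(z,T)>0$ gives $H_{zz}(z,T)>0$ for every $z\in \mathbb{R}$.

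Then I would propagate this positivity backward in time. Since $H$ solves the (backward) heat equation (\ref{1heat}), so does $H_{zz}$; reversing time, $v(z,s):=H_{zz}(z,T-s)$ solves the forward heat equation $v_{s}=\tfrac{1}{2}|\lambda|^{2}v_{zz}$ on $\mathbb{R}\times[0,T]$ with strictly positive initial datum $v(z,0)=H_{zz}(z,T)>0$. The growth estimates (\ref{H-terminals})--(\ref{H-terminal-third}) place $v$ in a Tychonoff uniqueness class for the heat equation, so the (strong) maximum principle applies and gives $v(z,s)>0$ throughout $\mathbb{R}\times[0,T]$, i.e.\ $H_{zz}(z,t)>0$ for all $(z,t)\in \mathbb{D}$. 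Combined with the first step, this yields $r_{x}(x,t)>0$ for every $(x,t)\in \mathbb{R}_{+}\times[0,T)$.

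The only subtle point is the maximum principle step, since $v$ is unbounded; however the explicit polynomial-in-$e^{\delta z}$ bounds inherited from (\ref{H-terminals}) ensure we remain in an admissible class, so this is routine rather than a genuine obstacle. Everything else is algebraic manipulation of (\ref{r-H-auxiliary}) and (\ref{risktolerance-T-H}).
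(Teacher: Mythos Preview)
Your proposal is correct and follows essentially the same route as the paper: both reduce the claim to $H_{zz}(z,t)>0$ via the identity $r_{x}(H,t)=H_{zz}/H_{z}$, verify this at $t=T$ from $R'>0$, and then propagate the positivity using that $H_{zz}$ solves the heat equation together with the growth bounds of Lemma~2 and the comparison/maximum principle. Your treatment of the maximum-principle step (time reversal, Tychonoff class) is a bit more explicit than the paper's, but the argument is the same.
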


\begin{proof}
Differentiating (\ref{risktolerance-T-H}) for $t=T$ yields, 
\begin{equation}
R^{\prime }\left( x\right) =\left. \frac{H_{zz}\left( z,T\right) }{%
H_{z}\left( z,T\right) }\right \vert _{z=H^{\left( -1\right) }\left(
x,T\right) }.  \label{R'}
\end{equation}

Using the monotonicity of $H\left( z,t\right) $ and (\ref{H-fullrange}), we
see that $R\left( x\right) $ is strictly increasing if and only if the
auxiliary function $H\left( z,T\right) $ is strictly concave, $H_{zz}\left(
z,T\right) >0.$ On the other hand, the function $H_{zz}\left( z,t\right) $
also solves the heat equation (\ref{1heat}) with terminal condition $%
H_{zz}\left( z,T\right) =\left( I\left( e^{-z}\right) \right) ^{\prime
\prime }.$ From Proposition 2 and the comparison principle\textbf{\ }for (%
\ref{1heat}) we deduce that $H_{zz}\left( z,t\right) >0,$ $\left( z,t\right)
\in \mathbb{D}.$ Using (\ref{r-H}) we have that 
\[
r_{x}\left( x,t\right) =\left. \frac{H_{zz}\left( z,t\right) }{H_{z}\left(
z,t\right) }\right \vert _{z=H^{\left( -1\right) }\left( x,t\right) }, 
\]%
and using that $H_{z}\left( x,t\right) >0$ we conclude.
\end{proof}

\bigskip

Next, we examine how the convexity/concavity of risk tolerance coefficient $%
R\left( x\right) $ affects the behavior of the risk tolerance function $%
r\left( x,t\right) $ in \textit{both} space and time.

The curvature of $R\left( x\right) $ has been a topic of long-standing
debate. We refer the reader to \cite{gollier} for an extensive discussion as
well as to \cite{lajeri} and \cite{maggi}. As it is argued therein, there
are arguments and results which support both assumptions. Among others, it
is argued in \cite{hennessy06} that a concave risk tolerance coefficient
implies that the risk aversion is proper, standard and risk vulnerable (cf.,
respectively, \cite{zeck}, \cite{kimball} and \cite{gollier96}). An
empirical study in \cite{guiso08} also suggests that the risk tolerance is a
concave function of wealth.

For a portfolio problem with explicit solutions with a convex risk
tolerance, we refer the reader to \cite{chen11} and \cite{zariphopoulou09}.

\begin{proposition}
\label{propmatrix} Let the risk tolerance coefficient $R\left( x\right) ,$ $%
x\geq 0,$ be concave (convex). Then, the following assertions hold for the
risk tolerance function:

i)\ for each $x\geq 0,$ $r\left( x,t\right) $ is increasing (decreasing) in
time,

ii) for each $t\in \left[ 0,T\right) $, $r\left( x,t\right) $ is concave
(convex).
\end{proposition}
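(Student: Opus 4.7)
The plan is to reduce both parts to a statement about $H_z$ via the transformation (\ref{r-H}), which I rewrite as $r(H(z,t), t) = H_z(z, t)$, and then derive (i) directly from the Black equation once (ii) is in hand. For (ii), differentiating $r(H,t) = H_z$ twice in $z$ and using $H_z > 0$ yields the key identity
\[
r_{xx}(H(z,t), t) \;=\; \frac{(\log H_z(z,t))_{zz}}{H_z(z,t)}.
\]
Since $H(\cdot, t)$ is of full range (Lemma \ref{helpus}), the sign of $r_{xx}(\cdot, t)$ on $\mathbb{R}_+$ matches the sign of $(\log H_z(\cdot, t))_{zz}$ on $\mathbb{R}$. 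Hence $r(\cdot, t)$ is concave (respectively convex) if and only if $H_z(\cdot, t)$ is log-concave (respectively log-convex). At $t = T$, using (\ref{risktolerance-T-H}) together with the chain rule applied twice gives
\[
(\log H_z(z,T))_{zz} = R''(H(z,T)) \cdot R(H(z,T)),
\]
which inherits the sign of $R''$; so the assumed concavity (convexity) of $R$ translates into log-concavity (log-convexity) of the terminal datum $H_z(\cdot, T)$.

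The remaining step is to propagate log-concavity/convexity backward in time along the heat flow satisfied by $H_z$ (Lemma 2). Representing $H_z(\cdot, t)$ as a Gaussian convolution of $H_z(\cdot, T)$, one invokes Pr\'ekopa's theorem in the log-concave case and, in the log-convex case, the elementary fact that an integral of log-convex functions against a non-negative measure is log-convex (this follows because both sums and products of log-convex functions are log-convex, and pointwise limits of such integrals inherit the property). Either way, $H_z(\cdot, t)$ retains the relevant property for every $t \in [0, T]$, and the equivalence above yields (ii).

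Part (i) follows at once from the Black equation (\ref{fastdiffusion}): writing it as $r_t = -\tfrac{1}{2}|\lambda|^2 r^2 r_{xx}$ and using $r \geq 0$, the sign of $r_t$ is opposite to that of $r_{xx}$. Thus in the concave case $r_t \geq 0$ and $r$ is non-decreasing in time, while in the convex case $r_t \leq 0$ and $r$ is non-increasing. The most delicate aspect is the legitimate use of the Gaussian convolution representation and the invocation of the preservation theorems; for this the growth estimates of Lemma 2 together with the decay (\ref{I-decay}) supply the integrability needed for the terminal datum $H_z(\cdot, T)$.
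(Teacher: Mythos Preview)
Your proposal is correct and follows essentially the same route as the paper: reduce (ii) to the log-concavity/convexity of $H_z$ via the identity $r_{xx}(H,t)=H_z^{-1}(\log H_z)_{zz}$, propagate that property under the heat flow (the paper packages this as Proposition~\ref{nylogconvex}, proved via Pr\'ekopa--Leindler and H\"older, exactly as you sketch), and then read off (i) from the Black equation. Your intermediate formula $(\log H_z(z,T))_{zz}=R''(H(z,T))\,R(H(z,T))$ is just a rewriting of the paper's equation~(\ref{better}).
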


\begin{proof}
We first observe that if (ii) holds, then (i) follows directly from equation
(\ref{fastdiffusion}). To show the former, we argue as follows.
Differentiating (\ref{r-H}) twice at $t=T$ yields%
\begin{equation}
R^{\prime \prime }\left( x\right) =\frac{1}{H_{z}\left( z,T\right) }\left. 
\frac{\partial ^{2}}{\partial z^{2}}\left( \log H_{z}\left( z,T\right)
\right) \right \vert _{z=H^{\left( -1\right) }\left( x,T\right) }.
\label{better}
\end{equation}%
Therefore, $R\left( x\right) $ is concave if and only if $H_{z}\left(
z,T\right) $ is log-concave. But then, $H_{z}$ solves the heat equation (\ref%
{1heat}) with positive log-concave terminal data. Applying Proposition \ref%
{nylogconvex} for $h_{0}\left( x\right) =H_{z}\left( x,T\right) ,$ we deduce
that, for each $t\in \left[ 0,T\right) ,$ the function $H_{z}\left(
z,t\right) $ is also log-concave. Differentiating (\ref{r-H}) twice yields%
\[
r_{xx}\left( x,t\right) =\frac{1}{H_{z}\left( z,t\right) }\left. \frac{%
\partial ^{2}}{\partial z^{2}}\left( \log H_{z}\left( z,t\right) \right)
\right \vert _{_{z=H^{\left( -1\right) }\left( x,t\right) }}, 
\]%
and using (\ref{H-fullrange}) we conclude.

The convex case follows along similar arguments but using, instead, the
preservation of the log-convexity property of solutions to the heat equation
that $H_{z}\left( z,t\right) $ solves.
\end{proof}

\bigskip

For the case of concave risk tolerance coefficient, the above result is a
bit surprising if the investment horizon $\left[ 0,T\right] $ is long, for
it says that the investor should decrease his allocations (in terms of
feedback functions) in the risky assets as she gets older. This feature is
central in the management of life-cycle funds. We refer the reader, among
others, to \cite{bodie}, \cite{branch}, \cite{samuelson89}, \cite{spitzer}
and \cite{surz}. The temporal behavior of the value function and the optimal
policies have been examined in more extended model settings in \cite%
{choulli09} and \cite{larsen12}. Therein, however, the generality of the
model did not allow for specific results as the one above.

\bigskip

Next, we assume that the risk tolerance coefficient $R\left( x\right) $ is
an $S$-shaped function and examine whether this shape is propagated at
previous times. Key role plays the result of \cite{angenant} for the set of
zero points for an auxiliary linear parabolic equation (see (\ref{angenent})
below).

\begin{proposition}
Assume that there exists a unique $\hat{x}>0,$ such that the risk tolerance
coefficient $R\left( x\right) $ is convex (concave) in $\left[ 0,\hat{x}%
\right] $ and concave (convex) in $\left( \hat{x},\infty \right) $. Let $G:%
\mathbb{D}\rightarrow \mathbb{R}$ be defined as 
\begin{equation}
G\left( z,t\right) :=\frac{H_{zzz}\left( z,t\right) }{H_{zz}\left(
z,t\right) }-\frac{H_{zz}\left( z,t\right) }{H_{z}\left( z,t\right) },
\label{G-fn}
\end{equation}%
with $H$ solving (\ref{1heat}) and (\ref{heat-terminal}), and assume that,
for each $t\in \left[ 0,T\right] $, 
\begin{equation}
\lim_{z\downarrow -\infty }G\left( z,t\right) >0\textrm{ }\left( <0\right) 
\textrm{ \  \  \ and \ }\lim_{z\uparrow \infty }G\left( z,t\right) <0\textrm{ }(>0)
\label{G-1}
\end{equation}%
and 
\begin{equation}
G_{z}\left( \hat{z},T\right) \neq 0,\textrm{ \  \ for \ }\hat{z}=H^{\left(
-1\right) }\left( \hat{x},T\right) .  \label{G-2}
\end{equation}%
Then, there exists a unique continuous curve, $X\left( t\right) ,$ $t\in %
\left[ 0,T\right] ,$ with $X\left( T\right) =\hat{x},$ such that, for each $%
t\in \left[ 0,T\right) ,$ $r\left( x,t\right) $ is convex (concave) in $%
\left[ 0,X\left( t\right) \right] $ and concave (convex) in $\left( X\left(
t\right) ,\infty \right) .$
\end{proposition}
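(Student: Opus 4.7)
The plan is to recast the question about the curvature regions of $r(\cdot,t)$ as a question about the zero set of a scalar function satisfying a linear parabolic equation, and then invoke the zero-set theorem of \cite{angenant}.

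First, from $r(H(z,t),t)=H_z(z,t)$ I differentiate twice in $z$ and solve algebraically for $r_{xx}$ at $x=H(z,t)$, obtaining
\[
r_{xx}\bigl(H(z,t),t\bigr) \;=\; \frac{H_{zzz}H_z - H_{zz}^{\,2}}{H_z^{\,3}} \;=\; \frac{H_{zz}}{H_z^{\,2}}\,G(z,t).
\]
Since $H_z>0$ always, and the standing assumption $R'>0$ together with the argument in Proposition \ref{propt1} yields $H_{zz}>0$ throughout $\mathbb{D}$, the sign of $r_{xx}$ at $(H(z,t),t)$ coincides with the sign of $G(z,t)$. Because $H(\cdot,t):\mathbb{R}\to\mathbb{R}_+$ is a strictly increasing bijection (Lemma \ref{helpus}), the zero sets of $r_{xx}(\cdot,t)$ and $G(\cdot,t)$ are in one-to-one correspondence via $x=H(z,t)$.

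Second, differentiating the Black equation (\ref{fastdiffusion}) twice in $x$, the function $\psi := r_{xx}$ satisfies the linear parabolic equation
\[
\psi_t + \tfrac{1}{2}|\lambda|^{2}r^{2}\psi_{xx} + 2|\lambda|^{2}r\,r_x\,\psi_x + |\lambda|^{2}\bigl(r_x^{2}+r\,r_{xx}\bigr)\,\psi = 0,
\]
with $\psi(\cdot,T)=R''(\cdot)$, the smooth functions $r,r_x,r_{xx}$ playing the role of frozen coefficients. After the time reversal $\tau=T-t$ this becomes a standard forward linear parabolic equation, and Angenent's theorem applies: the number of sign changes in $x$ of $\psi(\cdot,t)$ is nonincreasing as $t$ decreases from $T$. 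Since $R''$ has, by hypothesis, exactly one sign change at $\hat{x}$, the function $r_{xx}(\cdot,t)$ has at most one sign change for each $t<T$. On the other hand, the limits (\ref{G-1}) of $G$, translated through the first step and the full range of $H(\cdot,t)$, force $r_{xx}(\cdot,t)$ to take opposite signs near $x=0$ and near $x=\infty$, so at least one sign change occurs. Hence exactly one, and I define $X(t)$ to be the unique zero of $r_{xx}(\cdot,t)$ in $(0,\infty)$.

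For the continuity of $X(\cdot)$ and the identification $X(T)=\hat{x}$, I rely on the finer part of Angenent's result, which asserts that the zero set of such a parabolic solution is a finite union of continuous curves. Condition (\ref{G-2}) translates, via the first-step identity evaluated at a zero of $G$, into $r_{xxx}(\hat{x},T)\neq 0$, so the terminal zero is simple; the implicit function theorem then yields local continuity at $t=T$ and pins down $X(T)=\hat{x}$. The convex case follows by the identical argument with all signs reversed. The main obstacle is the rigorous application of Angenent's theorem on the unbounded half-line $\mathbb{R}_+$, where the leading coefficient $r^{2}(x,t)$ grows in $x$; one has to rule out that sign changes escape to $0$ or $\infty$ as $t$ varies. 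Condition (\ref{G-1}) is exactly the asymptotic input that prevents such escape, either through a compact exhaustion in $x$ or, more naturally, by working on the $z$-side where $G$ has well-defined, time-uniform signs at $z=\pm\infty$.
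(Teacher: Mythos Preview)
Your overall strategy---reduce the curvature question to a single sign change of a solution of a linear parabolic equation and invoke Angenent's zero-set theorem---is exactly the paper's. The difference is \emph{where} you apply Angenent. You differentiate Black's equation twice and work with $\psi=r_{xx}$ in the $x$-variable on $(0,\infty)$; the paper instead works with $G$ itself in the $z$-variable. Observing that $u:=H_{zzz}/H_{zz}=(\log H_{zz})_z$ and $v:=H_{zz}/H_z=(\log H_z)_z$ each satisfy a Burgers equation (Cole--Hopf), one subtracts to obtain the linear equation
\[
G_t+\tfrac{1}{2}G_{zz}+u\,G_z+v_z\,G=0,
\]
whose leading coefficient is constant and whose lower-order coefficients $u,\,v_z$ are bounded by the estimates (\ref{H-inequalities})--(\ref{H-inequalities-third}). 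This is precisely the ``$z$-side'' alternative you flag at the end as the natural fix; the paper takes it from the start, thereby sidestepping the very obstacle you identify---the degeneracy of $r^2$ at $x=0$ and its quadratic growth at infinity, which make a direct application of Angenent to $\psi$ on the half-line delicate.

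Two further points of comparison. First, for continuity of the inflection curve the paper does not rely on the structure theorem for the zero set but gives an elementary argument: assumption (\ref{G-1}) confines all zeros of $G(\cdot,t)$ to a bounded $z$-interval uniformly in $t$, so any sequence $Z(t_n)$ has a convergent subsequence, and uniqueness of the zero at each time forces the limit to be $Z(t_0)$. Second, your translation of (\ref{G-2}) into $r_{xxx}(\hat{x},T)\neq 0$ is correct (differentiate $r_{xx}(H,t)=\tfrac{H_{zz}}{H_z^2}G$ at a zero of $G$), and the paper uses exactly this to start the curve via the implicit function theorem near $t=T$ before the Angenent non-increase argument extends it to all of $[0,T]$. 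In short: your plan is sound, but the paper's direct passage to $G$ in $z$ is the cleaner implementation and avoids the regularity issues you yourself note.
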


\begin{proof}
We only consider the case that $R\left( x\right) $ is convex in $\left[ 0,%
\hat{x}\right] $ and concave in $\left( \hat{x},\infty \right) ,$ since the
other case follows similarly. From (\ref{better}) we have 
\begin{equation}
R^{\prime \prime }\left( x\right) =\left. \frac{H_{zz}\left( z,T\right) }{%
H_{z}^{2}\left( z,T\right) }G\left( z,T\right) \right \vert _{z=H^{\left(
-1\right) }\left( x,T\right) }.  \label{better-1}
\end{equation}%
Using the assumptions for $R\left( x\right) $, the strict monotonicity and
full range of $H\left( x,T\right) ,$ we deduce that 
\begin{equation}
G\left( z,T\right) >0,\textrm{ }z<\hat{z},\textrm{ \ }G\left( \hat{z},T\right) =0%
\textrm{, \ }G\left( z,T\right) <0,\textrm{ }z>\hat{z}.  \label{G-again}
\end{equation}%
Next we show that there exists a unique continuous curve, say $Z\left(
t\right) ,$ $t\in \left[ 0,T\right] ,$ with $Z\left( T\right) =\hat{z},$
such that the above properties of $G\left( z,T\right) $ are "preserved" at
previous times, specifically, 
\begin{equation}
G\left( z,t\right) >0,\textrm{ }z<Z\left( t\right) ,\textrm{ \ }G\left( Z\left(
t\right) ,t\right) =0,\textrm{ \ }G\left( z,t\right) <0,\textrm{ }z>Z\left(
t\right) .\textrm{\ }  \label{G-good}
\end{equation}%
\ For this, define $u\left( z,t\right) :=\frac{H_{zzz}\left( z,t\right) }{%
H_{zz}\left( z,t\right) }$ and $v\left( z,t\right) :=\frac{H_{zz}\left(
z,t\right) }{H_{z}\left( z,t\right) },$ and observe that $G=u-v$ solves the
linear pde%
\begin{equation}
G_{t}\left( z,t\right) +\frac{1}{2}G_{zz}\left( z,t\right) +u\left(
z,t\right) G_{x}\left( z,t\right) +v_{z}\left( z,t\right) G\left( z,t\right)
=0,  \label{angenent}
\end{equation}%
with $G\left( z,T\right) $ as in (\ref{G-again}).

Using the estimates of the partial derivatives of $H$ we easily deduce that $%
G$ is smooth on $\mathbb{D}.$ This together with (\ref{G-2}) yield that in a
small neighborhood, say $B\left( \hat{z},t\right) ,$ $t\in \left(
T-\varepsilon ,T\right] ,$ $\varepsilon $ small, the ODE $\dot{Z}\left(
t\right) =-\frac{G_{t}\left( Z\left( t\right) ,t\right) }{G_{z}\left(
Z\left( t),t\right) \right) },$ with $Z\left( T\right) =\hat{z},$ has a
unique continuous solution $Z\left( t\right) $ on which $G\left( Z\left(
t\right) ,t\right) =0.$

Next, note that assumption (\ref{G-1})\ implies that for each $t\in \left[
0,T\right) ,$ there exists at least one zero point of $G\left( z,t\right) .$
On the other hand, the results of \cite{angenant} for the set of zero points
for solutions of linear PDE, as the equation (\ref{angenent}) above, yield
that the set of these points cannot be decreasing as time increases. In
other words, the number of zero points of $G\left( z,t_{1}\right) $ is less
or equal the number of zero points of $G\left( z,t_{2}\right) ,$ for $0\leq
t_{1}<t_{2}.$ On the other hand, there is a unique curve of zero points in $%
\left( T-\varepsilon ,T\right] ,$ and thus for every $t\in \left[
0,T-\varepsilon \right] ,$ there must be also a unique zero point, denoted
by $Z\left( t\right) .$

To show the continuity of $Z\left( t\right) ,$ $t\in \left[ 0,T\right] ,$ we
argue as follows. Let $t_{0}\in \left[ 0,T\right) $ and consider a sequence $%
t_{n}\rightarrow t_{0},$ with $G\left( Z\left( t_{n}\right) ,t_{n}\right)
=0. $ Assumption (\ref{G-1}) implies that $Z\left( t_{n}\right) $ is
bounded, and therefore along a subsequence, say $Z\left( t_{k_{n}}\right) ,$
we have that $Z\left( t_{k_{n}}\right) \rightarrow z_{0},$ for some $%
z_{0}\in \mathbb{R}.$ Moreover, $G\left( Z\left( t_{k_{n}}\right)
,t_{k_{n}}\right) \rightarrow G\left( z_{0},t_{0}\right) $ and thus $G\left(
z_{0},t_{0}\right) =0.$ However, for each $t_{0},$ there is a unique zero
point and, thus, it must be that $z_{0}=Z\left( t_{0}\right) .$ Therefore,
there is a unique continuous curve $Z\left( t\right) ,$ $t\in \left[ 0,T%
\right] ,$ such that (\ref{G-good}) holds.

Recall that, similarly to (\ref{better-1}), we have that 
\begin{equation}
r_{xx}\left( x,t\right) =\left. \frac{H_{zz}\left( z,t\right) }{%
H_{z}^{2}\left( z,t\right) }G\left( z,t\right) \right \vert _{z=H^{\left(
-1\right) }\left( x,t\right) },  \label{r-again}
\end{equation}%
with $\frac{H_{zz}\left( z,t\right) }{H_{z}^{2}\left( z,t\right) }>0,$ $%
\left( z,t\right) \in \mathbb{D}.$ 

Next, we define, for $t\in \left[ 0,T\right] ,$ the curve 
\[
X\left( t\right) :=H\left( Z\left( t\right) ,t\right) .
\]%
Let $t_{0}$ in $\left[ 0,T\right) $ and $x_{0}<X\left( t_{0}\right) .$ The
strict spatial monotonicity of $H^{\left( -1\right) }$ yields that $%
H^{\left( -1\right) }\left( x,t_{0}\right) <Z\left( t_{0}\right) $ and, in
turn, (\ref{G-good}) implies that $G\left( H^{\left( -1\right) }\left(
x_{0},t_{0}\right) ,t_{0}\right) >0.$ Then, (\ref{r-again}) gives $%
r_{xx}\left( x_{0},t_{0}\right) >0.$ The rest of the proof follows easily.
\end{proof}

\subsection{The relative risk tolerance}

The relative risk tolerance coefficient $\tilde{R}\left( x\right) $ and the
relative risk tolerance function $\tilde{r}\left( x,t\right) $ are defined,
for $x>0,$ $t\in \left[ 0,T\right] ,$ as%
\[
\tilde{R}\left( x\right) =-\frac{U^{\prime }\left( x\right) }{xU^{\prime
\prime }\left( x\right) }\textrm{ \  \  \ and \  \ }\tilde{r}\left( x,t\right) =-%
\frac{u_{x}\left( x,t\right) }{xu_{xx}\left( x,t\right) }.
\]%
There are also given in the implicit form%
\begin{equation}
\tilde{R}\left( H\left( z,T\right) \right) =\frac{\partial }{\partial z}\log
H\left( z,T\right) \textrm{ \  \  \ and \  \ }\tilde{r}\left( H\left( z,t\right)
,t\right) =\frac{\partial }{\partial z}\log H\left( z,t\right) ,
\label{relative-aux}
\end{equation}%
as it follows from (\ref{r-H}). It also follows that $\tilde{r}\left(
x,t\right) $ solves the \textit{Burger's }equation%
\[
\tilde{r}_{t}\left( x,t\right) +\frac{1}{2}\left \vert \lambda \right \vert
^{2}\tilde{r}_{xx}\left( x,t\right) +\left \vert \lambda \right \vert ^{2}%
\tilde{r}\left( x,t\right) \tilde{r}_{x}\left( x,t\right) =0,
\]%
with $\tilde{r}\left( x,T\right) =\tilde{R}\left( x\right) .$

Representation (\ref{relative-aux}) can be used to prove the following
results. The first was proved by \cite{Xia}, and we provide a much shorter
proof. The second result is, to the best of our knowledge, new.

\begin{proposition}
Assume that the relative risk tolerance coefficient $\tilde{R}\left(
x\right) $ is increasing (decreasing). Then, for each $t\in \left(
0,T\right) ,$ the relative risk tolerance function $\tilde{r}\left(
x,t\right) $ is also increasing (decreasing).
\end{proposition}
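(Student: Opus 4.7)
The plan is to exploit the implicit representation $\tilde r(H(z,t),t) = (\log H(z,t))_z$ from (\ref{relative-aux}) and reduce the spatial monotonicity of $\tilde r$ to a convexity/concavity property of $\log H$ in the $z$-variable, which the heat equation then propagates backward from $t=T$.

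First I would differentiate the identity $\tilde r(H(z,t),t) = H_z(z,t)/H(z,t)$ with respect to $z$, obtaining
$$\tilde r_x(H(z,t),t)\, H_z(z,t) \;=\; \frac{\partial^2}{\partial z^2}\log H(z,t).$$
Since $H_z(z,t)>0$ by Lemma 2, the sign of $\tilde r_x$ at the point $x=H(z,t)$ is governed entirely by the sign of $(\log H(z,t))_{zz}$. At $t=T$ the same computation applied to $\tilde R(H(z,T)) = (\log H(z,T))_z$, combined with the strict monotonicity and full range (\ref{H-fullrange}) of $H(\cdot,T)$, shows that $\tilde R$ is increasing (resp.\ decreasing) on $\mathbb{R}_+$ if and only if the terminal datum $H(\cdot,T)$ is log-convex (resp.\ log-concave) on $\mathbb{R}$.

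Next I would invoke the preservation of log-convexity/log-concavity along the heat flow (\ref{1heat}) --- precisely the tool used in the proof of Proposition \ref{propmatrix} where the preservation result (applied there to $H_z$) was cited. Here the same result applies directly to $H$ itself, which is a strictly positive solution of (\ref{1heat}) by Lemma 1 with the growth inherited from (\ref{I-decay}). This transfers the log-convexity (resp.\ log-concavity) of $H(\cdot,T)$ to $H(\cdot,t)$ for every $t\in[0,T)$; equivalently, $(\log H(z,t))_{zz}\geq 0$ (resp.\ $\leq 0$) throughout $\mathbb{D}$.

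Combining the two reductions, $\tilde r_x(x,t) \geq 0$ (resp.\ $\leq 0$) on $\mathbb{D}_+$, which is the claimed monotonicity of $\tilde r(\cdot,t)$. The only delicate step is the backward-in-time preservation of the log-convex/log-concave character of $H$, but this is the same principle already invoked in Proposition \ref{propmatrix}; since $H$ differs from $H_z$ only by its terminal datum while satisfying the same equation (\ref{1heat}), no new ingredient beyond those already established in the paper is required.
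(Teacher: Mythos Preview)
Your proof is correct and follows essentially the same route as the paper's own argument: reduce the sign of $\tilde r_x$ to the sign of $(\log H)_{zz}$ via the representation (\ref{relative-aux}), then invoke the preservation of log-convexity/log-concavity for the heat equation (Proposition~\ref{nylogconvex}) applied to $H$. If anything, you are slightly more careful in carrying the factor $H_z$ when differentiating, though since $H_z>0$ this does not affect the sign analysis.
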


\begin{proof}
Differentiating (\ref{r-H}) at $t=T$ yields$,$ $\tilde{R}^{\prime }\left(
x\right) =\left. \frac{\partial ^{2}}{\partial z^{2}}\left( \log H\left(
z,T\right) \right) \right \vert _{z=H^{\left( -1\right) }\left( x,T\right)
}. $ Therefore, $\tilde{R}\left( x\right) $ is an increasing function if and
only if $H\left( z,T\right) $ is log-convex. Applying Proposition \ref%
{nylogconvex} we deduce that, for each $t\in \left[ 0,T\right) $, $H\left(
z,t\right) $ is log-convex. Using (\ref{r-H}) once more, we have that 
\[
\frac{\partial }{\partial x}\left( \tilde{r}\left( x,t\right) \right)
=\left. \frac{\partial ^{2}}{\partial z^{2}}\left( \log H\left( z,t\right)
\right) \right \vert _{z=H^{\left( -1\right) }\left( x,t\right) }, 
\]%
and using (\ref{H-fullrange}) we conclude.

The analogous results are readily derived when the relative risk tolerance
coefficient is decreasing. In this case, one uses the preservation of the
log-concavity property of the solution to the heat equation.
\end{proof}

\bigskip

The following result is related to the concavity/convexity of the relative
risk tolerance function.

\begin{proposition}
Assume that the function $H,$ solving (\ref{1heat}) and (\ref{heat-terminal}%
), satisfies, for $z\in \mathbb{R},$%
\begin{equation}
\left( \log H\left( z,T\right) \right) _{zz}<0\textrm{ \  \ and \  \ }\left(
\log H\left( z,T\right) \right) _{zzz}>0,  \label{condition}
\end{equation}%
(resp. $\left( \log H\left( z,T\right) \right) _{zz}>0$ and $\left( \log
H\left( z,T\right) \right) _{zzz}<0).$ Then, the relative risk tolerance
coefficient satisfies $\tilde{R}^{\prime \prime }\left( x\right) >0$ (resp. $%
\tilde{R}^{\prime \prime }\left( x\right) <0$) and, for $t\in \left[
0,T\right) ,$ the relative risk tolerance function is convex (concave) in
the spatial variable.
\end{proposition}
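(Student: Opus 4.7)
The approach pivots on the identity $\tilde{r}(H(z,t),t)=L_z(z,t)$ supplied by (\ref{relative-aux}), where $L=\log H$. Differentiating twice in $z$ and using $H_z = HL_z$ and $H_{zz} = H(L_{zz}+L_z^2)$, I would obtain the master formula
\begin{equation*}
\tilde{r}_{xx}(H(z,t),t) = \frac{L_z L_{zzz} - L_{zz}(L_{zz}+L_z^2)}{H^2 L_z^3}.
\end{equation*}
Since $H>0$ and $L_z = H_z/H>0$ by Lemma \ref{helpus}, the denominator is strictly positive. Furthermore, the standing assumption $R'>0$ implies $H_{zz}(z,T)>0$ (cf.\ the proof of Proposition \ref{propt1}), and since $H_{zz}$ itself solves the heat equation (\ref{1heat}), the comparison principle yields $H_{zz}>0$ on all of $\mathbb{D}$, so that $L_{zz}+L_z^2 = H_{zz}/H>0$ throughout. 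The numerator therefore equals $L_z L_{zzz}+(-L_{zz})(L_{zz}+L_z^2)$, which is strictly positive whenever $L_{zz}(z,t)<0$ and $L_{zzz}(z,t)>0$, and strictly negative when both signs are reversed.

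At $t=T$ the hypothesis (\ref{condition}) delivers exactly the two required sign conditions on $L_{zz}$ and $L_{zzz}$; combined with the master formula and the full range of $H(\cdot,T)$, one immediately concludes $\tilde{R}''(x)>0$ for every $x>0$. For $t<T$ I would propagate the two sign conditions backward in time separately. The inequality $L_{zz}(\cdot,t)<0$ expresses log-concavity of $H(\cdot,t)$, which is preserved on $[0,T]$ by Proposition \ref{nylogconvex} (the same tool already invoked in the proof of Proposition \ref{propmatrix}). For the persistence of $L_{zzz}(\cdot,t)>0$, I would differentiate the log-transformed equation
\begin{equation*}
L_t + \frac{1}{2}|\lambda|^2 (L_{zz}+L_z^2) = 0
\end{equation*}
three times in $z$ to obtain the linear parabolic equation
\begin{equation*}
\chi_t + \frac{1}{2}|\lambda|^2 \chi_{zz} + |\lambda|^2 L_z \chi_z + 3|\lambda|^2 L_{zz}\chi = 0
\end{equation*}
for $\chi := L_{zzz}$, whose coefficients $L_z$ and $L_{zz}$ are smooth and globally bounded on $\mathbb{D}$ by Lemma \ref{helpus} and the inequalities (\ref{H-inequalities}). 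Reversing time and applying the weak maximum principle to this linear PDE, the terminal positivity $\chi(\cdot,T)>0$ would propagate to $\chi(\cdot,t)>0$ for every $t\in[0,T]$.

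Substituting the two preserved sign conditions into the master formula, and invoking once more the full range of $H(\cdot,t)$, yields $\tilde{r}_{xx}(x,t)>0$ on $\mathbb{D}_+$; this establishes the convex case. The concave case is handled by the symmetric argument, with log-convexity of $H(\cdot,T)$ propagated by the same Proposition \ref{nylogconvex} and the negativity of $\chi$ preserved by the same linear PDE. The main technical hurdle is the second propagation step: although the PDE for $\chi$ is linear with smooth bounded coefficients, one must still verify that $\chi$ itself lies in a uniqueness/comparison class admissible for the weak maximum principle — a check that reduces to the bounds of Lemma \ref{helpus} and inequality (\ref{H-inequalities-third}) together with the representation $L_{zzz} = H_{zzz}/H - 3 H_z H_{zz}/H^2 + 2 H_z^3/H^3$.
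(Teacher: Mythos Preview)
Your proof is correct and follows essentially the same route as the paper: both derive the same expression for $\tilde r_{xx}$ in terms of $L=\log H$, propagate the sign of $L_{zz}$ via the log-concavity/convexity preservation of Proposition~\ref{nylogconvex}, and propagate the sign of $\chi=L_{zzz}$ via the linear parabolic equation $\chi_t+\frac{1}{2}|\lambda|^2\chi_{zz}+|\lambda|^2 L_z\chi_z+3|\lambda|^2 L_{zz}\chi=0$ together with a comparison argument. The only place where the paper is more explicit is the concave case: it singles out that the zeroth-order coefficient $3|\lambda|^2 L_{zz}$ is then \emph{positive}, so one must first establish a uniform bound on $|L_{zz}|$ (via (\ref{H-inequalities}) and (\ref{H-inequalities-third})) before the comparison principle applies---precisely the technical hurdle you flag in your closing paragraph, though you should note that this is where the asymmetry between the two cases actually lies rather than treating them as fully symmetric.
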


\begin{proof}
First, observe that 
\begin{equation}
\tilde{R}^{\prime \prime }\left( H\left( z,T\right) \right) =\frac{1}{%
H_{z}\left( z,T\right) }\left( \frac{\left( \log H\left( z,T\right) \right)
_{zz}}{H_{z}\left( z,T\right) }\right) _{z}  \label{R''}
\end{equation}%
\[
=\frac{1}{H_{z}\left( z,T\right) }\frac{\left( \log H\left( z,T\right)
\right) _{zzz}H_{z}\left( z,T\right) -\left( \log H\left( z,T\right) \right)
_{zz}H_{zz}\left( z,T\right) }{H_{z}^{2}\left( z,T\right) }. 
\]%
Therefore, if assumption (\ref{condition}) holds, we have that $\tilde{R}%
^{\prime \prime }\left( x\right) >0.$

Next, we claim that, for $\left( z,t\right) \in \mathbb{D},$ $\left( \log
H\left( z,t\right) \right) _{zz}<0$ \ and $\left( \log H\left( z,t\right)
\right) _{zzz}>0.$

For simplicity, we take $\left \vert \lambda \right \vert ^{2}=1.$ The first
assertion is proved in the Appendix. For the second inequality, we introduce
the functions $g,F:\mathbb{D\rightarrow R}$, defined as $g\left( z,t\right)
=\log H\left( z,t\right) $\ and $F\left( z,t\right) =g_{zzz}\left(
z,t\right) .$ Then, $F$ satisfies%
\[
F_{t}\left( z,t\right) +\frac{1}{2}F_{zz}\left( z,t\right) +g_{z}\left(
z,t\right) F_{z}\left( z,t\right) +3g_{zz}\left( z,t\right) F\left(
z,t\right) =0, 
\]%
with $F\left( z,T\right) >0.$

Furthermore, $g_{zz}\left( z,t\right) <0,$ and $F\left( z,T\right) \leq k,$
for some positive constant $k.$ The latter follows for direct
differentiation and repeated use of (\ref{H-inequalities}) and (\ref%
{H-inequalities-third}). Therefore, applying the comparison principle for
the above linear equation yields that $F\left( z,t\right) >0.$ Using that 
\[
\tilde{r}_{xx}\left( H\left( z,t\right) ,t\right) =\frac{1}{H_{z}\left(
z,t\right) }\frac{F\left( z,t\right) H_{z}\left( z,t\right) -g_{zz}\left(
z,t\right) H_{zz}\left( z,t\right) }{H_{z}^{2}\left( z,t\right) } 
\]%
we obtain that $\tilde{r}_{xx}\left( H\left( z,t\right) ,t\right) >0$ and we
conclude.

The second case, $\left( \log H\left( z,T\right) \right) _{zz}>0$ and $%
\left( \log H\left( z,T\right) \right) _{zzz}<0,$ is more involved, because
the coefficient of the zeroth-order term is now positive$,$ $g_{zz}\left(
z,t\right) >0,$ as it follows from the assumption that $\left( \log H\left(
z,T\right) \right) _{zz}>0$ and Proposition 13. Note, however, that 
\[
\left \vert \left( \log H\left( z,t\right) \right) _{zz}\right \vert =\left
\vert \frac{H_{zzz}\left( z,t\right) }{H\left( z,t\right) }-\frac{%
H_{zz}\left( z,t\right) }{H\left( z,t\right) }\frac{H_{z}\left( z,t\right) }{%
H\left( z,t\right) }\right. 
\]%
\[
\left. -2\frac{H_{z}\left( z,t\right) }{H\left( z,t\right) }\left( \frac{%
H_{zz}\left( z,t\right) }{H\left( z,t\right) }-\left( \frac{H_{z}\left(
z,t\right) }{H\left( z,t\right) }\right) ^{2}\right) \right \vert , 
\]%
and we conclude using (\ref{H-inequalities}) and (\ref{H-inequalities-third}%
). The rest of the proof follows easily.
\end{proof}

\section{A class of completely monotonic inverse marginals}

We consider utility functions whose inverse marginal $I$ is of the form 
\begin{equation}
I\left( x\right) =\int_{a}^{b}x^{-y}\nu \left( dy\right) ,  \label{I-cm}
\end{equation}%
where $\nu $ is a positive finite Borel measure, with support $1<a<b<\infty
. $ This is a \textit{completely monotonic} function, since $\left(
-1\right) ^{n}I^{\left( n+1\right) }\left( x\right) <0,$ $n=1,2,...,$ $x\in 
\mathbb{R}_{+}.$

\bigskip

\textit{Example}: i) $I\left( x\right) =x^{-\frac{1}{1-\gamma }},$ $\frac{1}{%
1-\gamma }\in \left[ a,b\right] ,$ which corresponds to the marginal utility 
$U^{\prime }\left( x\right) =x^{1-\gamma }.$

ii) $I\left( x\right) =x^{-\frac{1}{1-\gamma }}+x^{-\frac{2}{1-\gamma }},$ $%
\frac{1}{1-\gamma }\in \left[ a,\frac{b}{2}\right] ,$ which corresponds to
marginal utility $U^{\prime }\left( x\right) =2^{1-\gamma }\left( \sqrt{1+4x}%
-1\right) ^{\gamma -1}.$

\bigskip

We then see that (\ref{heat-terminal})\ yields that $H\left( x,T\right) $ is
an \textit{absolutely monotonic function, given by}%
\begin{equation}
H\left( z,T\right) =\int_{a}^{b}e^{zy}\nu \left( dy\right) .  \label{H-am}
\end{equation}%
It follows easily that, for $n=1,2,...,$ 
\[
\left \vert xI^{(n)}\left( x\right) \right \vert
=x\int_{a}^{b}y^{n}x^{-y-n}\nu \left( dy\right) \leq
b\int_{a}^{b}y^{n-1}x^{-y-\left( n-1\right) }\nu \left( dy\right)
=b\left \vert I^{\left( n-1\right) }\left( x\right) \right \vert ,
\]%
and, turn, 
\[
a\left \vert I^{\left( n-1\right) }\left( x\right) \right \vert \leq
\left \vert xI^{(n)}\left( x\right) \right \vert \leq b\left \vert I^{\left(
n-1\right) }\left( x\right) \right \vert .
\]%
Furthermore, at $t=T,$ $n=1,2,...,$ 
\begin{equation}
\frac{\partial ^{n}}{\partial z^{n}}H\left( z,T\right)
=\int_{a}^{b}y^{n}e^{zy}\nu \left( dy\right) >0,  \label{H-cm}
\end{equation}%
and, thus,%
\begin{equation}
a\frac{\partial ^{n-1}}{\partial z^{n-1}}H\left( z,T\right) \leq \  \frac{%
\partial ^{n}}{\partial z^{n}}H\left( z,T\right) \leq b\frac{\partial ^{n-1}%
}{\partial z^{n-1}}H\left( z,T\right) .  \label{H-cm-estimates}
\end{equation}%
The above together with the comparison principle for the heat equation
(which follows from (\ref{H-am}) and (\ref{I-decay})) yield that, for $%
\left( z,t\right) \in \mathbb{D},$ 
\begin{equation}
a\frac{\partial ^{n-1}}{\partial z^{n-1}}H\left( z,t\right) \leq \  \frac{%
\partial ^{n}}{\partial z^{n}}H\left( z,t\right) \leq b\frac{\partial ^{n-1}%
}{\partial z^{n-1}}H\left( z,t\right) .  \label{H-cm-estimates-t}
\end{equation}

\begin{proposition}
Assume that the inverse marginal has the form (\ref{I-cm}). Then, the
following assertions hold:

i) For $\left( x,t\right) \in \mathbb{D}_{+},$ 
\[
ax\leq r\left( x,t\right) \leq bx. 
\]

ii)\ The risk tolerance function is convex in $x$ and decreasing in time,
for $\left( x,t\right) \in \mathbb{D}_{+}.$

iii) For $n=1,2,...,$ there exist positive constants $K_{n},L_{n},$ such that%
\begin{equation}
\left \vert x^{n-1}\frac{\partial ^{n}r\left( x,t\right) }{\partial x^{n}}%
\right \vert \leq K_{n}\textrm{ \  \  \ and \  \ }\left \vert x^{n}\frac{\partial
^{n}}{\partial x^{n}}\left( \frac{r\left( x,t\right) }{x}\right) \right
\vert \leq L_{n}.  \label{estimates-cm}
\end{equation}
\end{proposition}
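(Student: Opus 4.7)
My plan is to handle the three assertions in sequence, in each case transferring the claim from $r(x,t)$ to a statement about the heat solution $H$ via the identity $r(H(z,t),t)=H_{z}(z,t)$ and then invoking the uniform ratio bounds (\ref{H-cm-estimates-t}) specific to the completely monotonic setting. For part (i), I apply (\ref{H-cm-estimates-t}) with $n=1$ to obtain $aH(z,t)\leq H_{z}(z,t)\leq bH(z,t)$; since $r(H(z,t),t)=H_{z}(z,t)$ by (\ref{r-H-auxiliary}) and $H(\cdot,t)$ is of full range on $\mathbb{R}_{+}$ by (\ref{H-fullrange}), setting $x=H(z,t)$ immediately yields $ax\leq r(x,t)\leq bx$.

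For part (ii), spatial convexity of $r(\cdot,t)$ is equivalent, through the identity
\[
r_{xx}(x,t)=\frac{1}{H_{z}(z,t)}\frac{\partial^{2}}{\partial z^{2}}\bigl(\log H_{z}(z,t)\bigr)\Big|_{z=H^{(-1)}(x,t)}
\]
used in the proof of Proposition \ref{propmatrix}, to log-convexity of $H_{z}(\cdot,t)$ in the first variable. At $t=T$, $H_{z}(z,T)=\int_{a}^{b}y\,e^{zy}\nu(dy)$ is log-convex by the Cauchy--Schwarz inequality (equivalently, $(\log H_{z}(z,T))_{zz}$ equals the variance of $y$ under the probability measure proportional to $y\,e^{zy}\nu(dy)$). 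Since $H_{z}$ is a positive solution of (\ref{1heat}), Proposition \ref{nylogconvex} propagates log-convexity to every $t\in[0,T)$, giving $r_{xx}(x,t)\geq 0$. The time monotonicity then follows directly from (\ref{fastdiffusion}): $r_{t}=-\frac{1}{2}|\lambda|^{2}r^{2}r_{xx}\leq 0$.

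For part (iii), I induct on $n$ on the $z$-side. Since $\partial_{x}=H_{z}^{-1}\partial_{z}$ under the substitution $x=H(z,t)$, a bookkeeping induction shows that $x^{n-1}\partial_{x}^{n}r$ evaluated at $x=H(z,t)$ is a universal polynomial, with coefficients depending only on $n$, in the dimensionless ratios $H/H_{z}$ and $(\partial^{k}H/\partial z^{k})/(\partial^{k-1}H/\partial z^{k-1})$ for $k=2,\ldots,n+1$. By (\ref{H-cm-estimates-t}), each such ratio is uniformly bounded on $\mathbb{D}$---the first by $1/a$, the rest by $b$---so a constant $K_{n}$ depending only on $a,b,n$ emerges. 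The relative-risk-tolerance estimate then follows from the Leibniz rule,
\[
x^{n}\frac{\partial^{n}}{\partial x^{n}}\!\left(\frac{r(x,t)}{x}\right)=\sum_{k=0}^{n}{n\choose k}(-1)^{n-k}(n-k)!\,x^{k-1}\frac{\partial^{k}r(x,t)}{\partial x^{k}},
\]
where the $k=0$ term is controlled by part (i) ($|r/x|\leq b$) and the $k\geq 1$ terms by $K_{1},\ldots,K_{n}$, producing $L_{n}$.

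The main obstacle is the inductive bookkeeping in part (iii): verifying the correct homogeneity of the rational expressions for $\partial_{x}^{n}r$ at $x=H$ so that they reduce to polynomials in the bounded ratios of (\ref{H-cm-estimates-t}). Part (ii) reduces to a one-line log-convexity computation combined with the already-cited propagation result, and part (i) is a direct substitution.
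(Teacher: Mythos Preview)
Your proposal is correct and follows essentially the same route as the paper. Parts (i) and (ii) are identical to the paper's argument: part (i) is the ratio bound $aH\leq H_z\leq bH$ combined with $r(H,t)=H_z$, and part (ii) is the Cauchy--Schwarz log-convexity of $H_z(\cdot,T)$ propagated by Proposition~\ref{nylogconvex} and read off through the identity for $r_{xx}$ from Proposition~\ref{propmatrix}, with time monotonicity then forced by (\ref{fastdiffusion}).

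For part (iii) the approaches differ only in bookkeeping. The paper first records the auxiliary bound $\bigl|\partial_z\bigl((\partial^{j}H)^{-1}\partial^{i}H\bigr)\bigr|\leq C_{ij}$, writes out $H^{2}r_{xxx}(H,t)$ explicitly in terms of such ratios and their $z$-derivatives, and then appeals to ``similar arguments'' for higher $n$ and for the relative estimate. Your version packages the same computation as an explicit induction showing that $H^{n-1}\partial_x^n r(H,t)$ is a polynomial in $\sigma=H/H_z$ and $\rho_k=\partial^{k}H/\partial^{k-1}H$, $2\leq k\leq n+1$; this is exactly what the paper's $C_{ij}$-bound encodes, since $\partial_z\sigma=1-\sigma\rho_2$ and $\partial_z\rho_k=\rho_k(\rho_{k+1}-\rho_k)$. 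Your Leibniz reduction of the second inequality in (\ref{estimates-cm}) to the first plus part (i) is a clean way to make explicit what the paper leaves to ``similar arguments.'' In short: same mechanism, slightly more systematic presentation on your side.
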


\begin{proof}
Part (i) follows directly from (\ref{r-H}) and (\ref{H-cm}).

To show (ii), we first observe that $H_{z}\left( z,T\right) H_{zzz}\left(
z,T\right) -H_{zz}^{2}\left( z,T\right) \geq 0,$ as it follows from the
inequality%
\[
\int_{a}^{b}ye^{zy}\nu \left( dy\right) \int_{a}^{b}y^{3}e^{zy}\nu \left(
dy\right) \geq \left( \int_{a}^{b}y^{2}e^{zy}\nu \left( dy\right) \right)
^{2}. 
\]%
Therefore, the function $H_{z}\left( z,T\right) $ is log-convex and we
conclude using Propositions 13 and 8.

For part (iii), we first observe that for $i,j=0,1,2,...,$ there exist
positive constants $C_{ij}$ such that 
\begin{equation}
\left \vert \frac{\partial }{\partial z}\left( \left( \frac{\partial
^{j}H\left( z,t\right) }{\partial z^{j}}\right) ^{-1}\frac{\partial
^{i}H\left( z,t\right) }{\partial z^{i}}\right) \right \vert \leq C_{ij},
\label{C-inequality}
\end{equation}%
which follows from direct differentiation and repeated application of
inequalities (\ref{H-cm-estimates-t}).

We only establish the first inequality in (\ref{estimates-cm}), say for $%
n=3, $ for the rest follows using similar arguments. To this end, observe
that we have, where all quantities involving $H$ and its derivatives are
evaluated at $(z,t)$, 
\[
H^{2}r_{xxx}\left( H,t\right) =\left( \frac{H}{H_{z}}\right) ^{2}\left(
\left( \frac{H_{zzz}}{H_{z}}\right) _{z}-2\frac{H_{zz}}{H_{z}}\left( \frac{%
H_{zz}}{H_{z}}\right) _{z}\right. 
\]%
\[
\left. -2\frac{H_{zz}}{H_{z}}\left( \frac{H_{zzz}}{H_{z}}-\left( \frac{H_{zz}%
}{H_{z}}\right) ^{2}\right) \right) , 
\]%
and, using (\ref{H-cm-estimates-t}) and (\ref{C-inequality}) repeatedly we
conclude.
\end{proof}

\bigskip

We conclude mentioning that because $I\ $is a completely monotonic function,
Berstein's theorem yields that it can be represented as the Laplace
transform for some positive finite measure $\tilde{\nu}$. Indeed, we have 
\[
I\left( x\right) =\int_{a}^{b}x^{-y}\nu \left( dy\right) =\int_{0}^{\infty
}e^{-x\rho }\tilde{\nu}\left( d\rho \right) , 
\]%
with 
\[
\tilde{\nu}\left( d\rho \right) =\left( \int_{a}^{b}\frac{\rho ^{y-1}}{%
\Gamma \left( y\right) }\nu \left( dy\right) \right) d\rho . 
\]

Complete monotonicity for modeling risk preferences has appeared in \cite%
{brockett}. Therein, this structural property is, however, assumed for the
marginal utility itself, and not for its inverse. Specifically, the authors
consider utility functions with the property 
\begin{equation}
U^{\prime }\left( x\right) =\int_{0}^{\infty }e^{-xy}\mu \left( dy\right) ,
\label{U-CM}
\end{equation}%
for some finite positive measure $\mu $.

Several interesting question arise. Firstly, what are the utility functions
that belong to both classes (\ref{I-cm}) and (\ref{U-CM})? Clearly, power
utilities do as well as some combinations of them. However, this is not the
case for arbitrary sums of power utilities.

The second question is whether complete monotonicity is being preserved at
previous times. Specifically, whether the marginal value function $%
u_{x}\left( x,t\right) $ remains completely monotonic if the marginal
utility $U^{\prime }\left( x\right) $ is as in (\ref{U-CM}). Similarly,
whether $u_{x}^{\left( -1\right) }\left( x,t\right) $ remains completely
monotonic, if the associated $I\left( x\right) $ is as in (\ref{I-cm}).

Thirdly, it is not clear which utilities from the two classes preserve
stochastic dominance of various degrees and, in a different direction, which
utilities allow for a dynamic extension (rolling horizon) of the investment
problem.

The above questions and other issues related to complete monotonicity of the
marginals and their inverses are being currently investigated by the authors
in \cite{kallblad13}.

\section{Appendix}

We discuss the preservation of the log-convexity and log-concavity of
solutions to the heat equation. The log-convexity property is a mere
consequence of H\"{o}lder's inequality while the log-concavity is more
involved. For the latter, we refer the reader to Theorem 1.3 in \cite%
{brascamp75} or to \cite{borell93} and, for the case of boundary data, to 
\cite{keady90}. The one-dimensional case we consider was first proved in 
\cite{schoen}.

\begin{proposition}
\label{nylogconvex} Let $h:\mathbb{D}\rightarrow \mathbb{R}_{+}$ be the
solution of the heat equation 
\[
h_{t}+\frac{1}{2}\left \vert \lambda \right \vert ^{2}h_{xx}=0, 
\]%
with terminal data $h\left( x,T\right) =h_{0}\left( x\right) ,$ with $%
h_{0}\in C^{2}\left( \mathbb{R}\right) $ satisfying $h_{0}\left( x\right) >0$
and the growth assumption $h_{0}\left( x\right) \leq e^{\gamma x},\gamma >0.$
Then, for each $t\in \left[ 0,T\right) ,$ the following assertions hold.

i) If $h_{0}\left( x\right) $ is a log-convex function, then $h\left(
x,t\right) $ is also log-convex.

ii)\ If $h_{0}\left( x\right) $ is a log-concave function, then $h\left(
x,t\right) $ is also log-concave.
\end{proposition}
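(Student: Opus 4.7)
My plan is to exploit the Gaussian representation of $h$. Setting $\tau=T-t$, equation $h_t+\tfrac12|\lambda|^2 h_{xx}=0$ with terminal data $h_0$ becomes the forward heat equation, and by the growth assumption $h_0(x)\le e^{\gamma x}$ the solution admits the convolution formula
\[
h(x,\tau)=\int_{\mathbb{R}} G_\tau(x-y)\,h_0(y)\,dy=\mathbb{E}\bigl[h_0(x+W_\tau)\bigr],
\]
where $G_\tau$ is the Gaussian density of variance $|\lambda|^2\tau$ and $W_\tau\sim N(0,|\lambda|^2\tau)$. This representation is valid for every $\tau\in(0,T]$ and smoothness/positivity of $h$ follow from standard heat-kernel estimates.

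For part (i), I would apply H\"older's inequality directly. For $\alpha,\beta\ge 0$ with $\alpha+\beta=1$, the identity $\alpha x_1+\beta x_2+W_\tau=\alpha(x_1+W_\tau)+\beta(x_2+W_\tau)$ combined with the log-convexity of $h_0$ gives, pointwise in $\omega$,
\[
h_0\bigl(\alpha x_1+\beta x_2+W_\tau\bigr)\le h_0(x_1+W_\tau)^\alpha\,h_0(x_2+W_\tau)^\beta.
\]
Taking expectations and invoking H\"older to the right-hand side yields
\[
h(\alpha x_1+\beta x_2,\tau)\le \mathbb{E}[h_0(x_1+W_\tau)]^\alpha\,\mathbb{E}[h_0(x_2+W_\tau)]^\beta=h(x_1,\tau)^\alpha\,h(x_2,\tau)^\beta,
\]
which is exactly log-convexity of $h(\cdot,\tau)$.

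For part (ii), the same trick cannot work, since H\"older only produces an upper bound whereas log-concavity is a lower bound. My plan is instead to invoke the Pr\'ekopa--Leindler inequality (equivalently, the Brascamp--Lieb marginalization result cited in the excerpt). I would observe that the integrand
\[
F(x,y):=G_\tau(x-y)\,h_0(y)
\]
is jointly log-concave on $\mathbb{R}^2$: the Gaussian kernel is log-concave in $(x,y)$ because $(x-y)^2$ is a convex quadratic form, $h_0(y)$ is log-concave in $y$ by hypothesis and trivially in $(x,y)$, and the product of log-concave functions is log-concave. Marginalizing out $y$ via Pr\'ekopa--Leindler then gives that $h(x,\tau)=\int F(x,y)\,dy$ is log-concave in $x$.

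The main obstacle, and the reason the two parts are qualitatively different, is exactly this asymmetry: H\"older disposes of log-convexity almost for free, but log-concavity requires the genuinely nontrivial reverse-type inequality supplied by Pr\'ekopa--Leindler. Once that tool is accepted, the role of the heat equation reduces to the elementary joint log-concavity of the Gaussian kernel, and the result follows with no further calculation.
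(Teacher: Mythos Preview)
Your proposal is correct and follows essentially the same approach as the paper: part (i) via the Feynman--Kac/Gaussian representation plus H\"older, and part (ii) via Pr\'ekopa--Leindler. The only cosmetic difference is that for (ii) you invoke the marginalization corollary (a jointly log-concave integrand yields a log-concave marginal), whereas the paper verifies the three-function Pr\'ekopa--Leindler hypothesis explicitly for $f,m,n$ built from the heat kernel and $h_0$; these are the same argument in different packaging.
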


\begin{proof}
For simplicity, we set $\left \vert \lambda \right \vert ^{2}=1.$

i)\ We need to show that for $\alpha \in \left( 0,1\right) $ and $x,y\in 
\mathbb{R},$ 
\[
h\left( \alpha x+(1-\alpha )y,t\right) \leq h\left( x,t\right) ^{\alpha
}h\left( y,t\right) ^{1-\alpha }. 
\]%
The Feynman-Kac formula, the log-convexity of the terminal datum and H\"{o}%
lder's inequality yield%
\[
h\left( \alpha x+\left( 1-\alpha \right) y,t\right) =E\left( h_{0}\left(
\alpha \left( x+W_{T-t}\right) +\left( 1-\alpha \right) \left(
y+W_{T-t}\right) \right) \right) 
\]%
\[
\leq E\left( \left( h_{0}\left( x+W_{T-t}\right) \right) ^{\alpha }\left(
h_{0}\left( y+W_{T-t}\right) \right) ^{1-\alpha }\right) 
\]%
\[
\leq \left( E\left( h_{0}\left( x+W_{T-t}\right) \right) \right) ^{\alpha
}\left( E\left( h_{0}\left( y+W_{T-t}\right) \right) \right) ^{1-\alpha } 
\]%
\[
=\left( h\left( x,t\right) \right) ^{\alpha }\left( h\left( y,t\right)
\right) ^{1-\alpha }. 
\]%
ii) We need to show that for $\alpha \in \left( 0,1\right) $ and $x,y\in 
\mathbb{R},$ 
\[
h\left( \alpha x+(1-\alpha )y,t\right) \geq h\left( x,t\right) ^{\alpha
}h\left( y,t\right) ^{1-\alpha }. 
\]%
The Pr\'{e}kopa-Leindler inequality yields that if, for $0<\alpha <1,$ $%
z,z^{\prime }\in \mathbb{R},$\ positive functions $f,m,n$ satisfy%
\[
f\left( \alpha z+(1-\alpha )z^{\prime }\right) \geq \left( m\left( z\right)
\right) ^{\alpha }\left( n\left( z^{\prime }\right) \right) ^{1-\alpha }, 
\]%
then, for $z\in \mathbb{R},$%
\[
\int \limits_{-\infty }^{\infty }f\left( z\right) dz\geq \left( \int
\limits_{-\infty }^{\infty }m\left( z\right) dz\right) ^{\alpha }\left( \int
\limits_{-\infty }^{\infty }n\left( z\right) dz\right) ^{1-\alpha }. 
\]%
The log-concavity of $h_{0}\left( x\right) $ yields that for $\alpha \in
\left( 0,1\right) ,$ $z,z^{\prime }\in \mathbb{R},$ 
\[
h_{0}\left( \alpha z+(1-\alpha )z^{\prime }\right) \geq \left( h_{0}\left(
z\right) \right) ^{\alpha }\left( h_{0}\left( z^{\prime }\right) \right)
^{1-\alpha }. 
\]%
Next, fix $(x,y,t)\in \mathbb{R\times R\times }\left[ 0,T\right] $ and
define the functions%
\[
f\left( z;x,y,t\right) :=e^{-\frac{\left( \alpha x+(1-\alpha )y-z\right) ^{2}%
}{4(T-t)}}h_{0}\left( z\right) 
\]%
\[
m\left( z;x,t\right) :=e^{-\frac{\left( x-z\right) ^{2}}{4(T-t)}}h_{0}\left(
z\right) \textrm{ \  \  \  \ and \  \ }n\left( z;y,t\right) :=e^{-\frac{\left(
y-z\right) ^{2}}{4(T-t)}}h_{0}\left( z\right) .\textrm{\ } 
\]%
We easily see, that 
\[
f\left( \alpha z+(1-\alpha )z^{\prime };x,y,t\right) \geq \left( m\left(
z;x,t\right) \right) ^{\alpha }\left( n\left( z^{\prime };y,t\right) \right)
^{1-\alpha }. 
\]%
Indeed, from the log-concavity of the functions $h_{0}\left( x\right) $ and $%
e^{-x^{2}}$ we have%
\[
f\left( \alpha z+(1-\alpha )z^{\prime };x,y,t\right) =e^{-\frac{\left(
\alpha x+(1-\alpha )y-\alpha z-(1-\alpha )z^{\prime }\right) ^{2}}{4(T-t)}%
}h_{0}\left( \alpha z+(1-\alpha )z^{\prime }\right) 
\]%
\[
\geq e^{-\frac{\left( \alpha (x-z)+(1-\alpha )(y-z^{\prime })\right) ^{2}}{%
4(T-t)}}\left( h_{0}\left( z\right) \right) ^{\alpha }\left( h_{0}\left(
z^{\prime }\right) \right) ^{1-\alpha } 
\]%
\[
\geq \left( e^{-\frac{\left( x-z\right) ^{2}}{4(T-t)}}h_{0}\left( z\right)
\right) ^{\alpha }\left( e^{-\frac{\left( y-z^{\prime }\right) ^{2}}{4(T-t)}%
}h_{0}\left( z^{\prime }\right) \right) ^{1-\alpha }. 
\]%
Therefore, 
\[
\int \limits_{-\infty }^{\infty }e^{-\frac{\left( \alpha x+(1-\alpha
)y-z\right) ^{2}}{4(T-t)}}h_{0}\left( z\right) dz 
\]%
\[
\geq \left( \int \limits_{-\infty }^{\infty }e^{-\frac{\left( x-z\right) ^{2}%
}{4(T-t)}}h_{0}\left( z\right) dz\right) ^{\alpha }\left( \int
\limits_{-\infty }^{\infty }e^{-\frac{\left( y-z\right) ^{2}}{4(T-t)}%
}h_{0}\left( z\right) dz\right) ^{1-\alpha }, 
\]%
and we conclude.
\end{proof}

\end{document}